\newenvironment{proof} {\textsc{Proof}\quad} {\hfill $\Box$\\}
\definecolor{citecolor}{rgb}{0.5,0.5,0.5}
\newcommand{\M}{\mathcal{M}}
\newcommand{\la}{\langle}
\newcommand{\ra}{\rangle}
\newcommand{\rel}[1]{\xrightarrow{#1}}
\newcommand{\lr}[1]{\langle #1 \rangle}
\newcommand{\llrr}[1]{\llbracket #1 \rrbracket}
\newcommand{\DKH}{\mathrm{\mathbf{DKH}}}
\renewcommand{\phi}{\varphi}
\newtheorem{theorem}{Theorem}
\newtheorem{definition}[theorem]{Definition}
\newtheorem{lemma}[theorem]{Lemma}
\newtheorem{proposition}[theorem]{Proposition}
\newtheorem{example}[theorem]{Example}
\newtheorem{corollary}[theorem]{Corollary}
\newtheorem{claim}{Claim}[theorem]
\newenvironment{claimproof}{\par\noindent\textit{Proof of Claim \theclaim:}\space}{\hfill $\blacksquare$}
\newcommand{\Kh}{\mathsf{Kh}}
\newcommand{\K}{\mathsf{K}}
\renewcommand{\phi}{\varphi}
\DeclareSymbolFont{symbolsC}{U}{txsyc}{m}{n}
\DeclareMathSymbol{\strictif}{\mathrel}{symbolsC}{74}
\newcommand{\SDKh}{\mathbb{SDKH}}
\newcommand{\CELeafN}{\ensuremath{\mathtt{CELeaf}}}
\newcommand{\CEInnerN}{\ensuremath{\mathtt{CEInner}}}
\newcommand{\Dom}{\ensuremath{\mathtt{dom}}}
\title{Distributed Knowing How}
\author{Bin Liu
\institute{Department of Philosophy, Peking University}
\email{liub85@stu.pku.edu.cn}
\and
Yanjing Wang
\institute{Department of Philosophy, Peking University}
\email{y.wang@pku.edu.cn}
}
\begin{document}
\maketitle

\begin{abstract}
% When agents collaborate to achieve a goal, they can share not only epistemic information, but also state-transitioning actions. Distributed actions emerge as a result of action combination within a group. We propose a multi-agent logic of goal-directed knowing how that characterizes the interaction between distributed knowledge and distributed knowing how. An intuitive logical system is given and we prove the soundness and completeness. Similar to distributed knowledge, the properties of distributed actions are captured by a monotonicity axiom, which reflects how distributed knowing how shares some common collaborative mechanisms with distributed knowledge.

Distributed knowledge is a key concept in the standard epistemic logic of knowledge-that. In this paper, we propose a corresponding notion of distributed knowledge-how and study its logic. Our framework generalizes two existing traditions in the logic of know-how: the individual-based multi-step framework and the coalition-based single-step framework. In particular, we assume a group can accomplish more than what its individuals can jointly do. The distributed knowledge-how is based on the distributed knowledge-that of a group whose multi-step strategies derive from distributed actions that subgroups can collectively perform. As the main result, we obtain a sound and strongly complete proof system for our logic of distributed knowledge-how, which closely resembles the logic of distributed knowledge-that in both the axioms and the proof method of completeness.

\end{abstract}

\section{Introduction}
One charming feature of epistemic logic is the formalization of group knowledge, such as common knowledge and distributed knowledge \cite{RAK}. In particular, distributed knowledge plays an important role in various applications of epistemic logic, which also brings technical challenges that go beyond the bisimulation-invariant modal logic (cf. e.g., \cite{AgotnesW17}). 

While epistemic logic mainly focuses on propositional \textit{de dicto} knowledge (knowledge-that), there is an increasing interest in studying \textit{de re} knowledge, such as knowledge-how/why/what, and so on \cite{sep-logic-epistemic}. It is natural to ask what the suitable notions of group knowledge-wh are \cite{Wang2018bkt}. In \cite{su2017distributed,fan2021commonly}, various common/distributed notions of \textit{knowing whether} have been proposed and studied. In this work, we try to propose a notion of distributed \textit{knowing how} and study its logic. 

We build our work by first combining two logical frameworks of knowing how. Initiated in \cite{Wang15:lori,wang2018goal}, the planning-based multi-step framework was inspired by the philosophical discussion of knowledge-how and the theory of automated planning in AI. In such a logic, one knows how to achieve $\phi$ means there is a \textit{plan} consisting of her own actions such that the agent knows that the plan can be executed and will always terminate and guarantee $\phi$. The plans can be linear, branching, or even a program written in a programming language \cite{LiW21}. The second framework is a coalition-based one-step setting initiated in \cite{naumov2017together,naumov2018together}, where a coalition knows how to achieve $\phi$ means there is a \textit{joint action} of the coalition such that the coalition knows that this joint action can make sure $\phi$ no matter what others do. To some extent, the second framework captures a simple form of distributed knowledge-how as \cite{LiW19} remarked, but in this work we aim for a much more general notion which requires the power of both frameworks.  The differences between these two existing approaches and our new ``mixed'' approach can be summarized below: 
\begin{center}
\begin{tabular}{|c|c|c|c|}
  approach  & know-that & know-how & actions  \\
  \hline
   planning-based &  individual & multi-step  & individual \\
   coalition-based & distributed & one-step &  joint\\
   mixed & distributed & multi-step & group 
\end{tabular}    
\end{center}
The last column specifies what actions are used. The first approach assumes that the actions by individuals can already move the states. The second approach presupposes that the transitions only happen due to the joint actions of \textit{all} the agents, which is suitable for game-theoretical settings. In contrast, actions in our approach are somehow the generalization of the two, featuring what we call \textit{group actions} that may or may not be decomposable into individual actions. To see the need for our mixed approach, let us see a few examples of knowledge-how intuitively distributed in a group based on various types of group actions, such that only the group knows how to achieve a certain goal without each member knowing how to do it. 

The first example is about actions that agents can only do together as a group. 
\begin{example}[Couch moving]\label{ex.extra}
%Two rescuers create a human stretcher with their arms interlocked to carry an injured boy across difficult terrain. They distributedly know how to move the boy safely by a group action, forming a stretcher, which does not exist as individual actions. 
No one can single-handedly move a heavy couch through a narrow doorway around a corner. However, if two people work together, they know how to do it through a synchronized group action of one person lifting and maneuvering one end of the couch and the other handling the opposite end of the couch. 
%No one can single-handedly lift and move a very heavy table, no matter how strong an individual is, as the table is too long. However, if a strong person works together with two normal people, they can together move the table: the strong one lifts one side of the table, with the other two holding the other side. 
\end{example}
The group action here in the example belongs to the group as a whole and is not decomposable into \textit{independent} individual actions. We will take such actions as \textit{atomic} in the framework. In contrast with the joint actions in the coalition-based game-like framework (and ATL-like epistemic frameworks \cite{vanderHoek2003,JamrogaA07}), other agents outside the group do not affect the result of such a group action, which is similar to the individual actions in the planning-based approach. 
%is \textit{not} inherited from its individuals or subgroups,

Moreover, distributed knowledge-how arise when a group can use actions from its members or subgroups.
\begin{example}[Theorem proving]\label{ex.col}
In collaborations on mathematical work, it often happens that two researchers do not know how to prove a theorem independently, but together they do. For example, $i$ knows how to prove a lemma, and $j$ knows how to prove a theorem given the lemma is true, then together they should know how to prove the theorem by first proving the lemma using $i$'s knowledge-how and then proving the theorem using $j$'s knowledge-how. Intuitively, the knowledge-how to prove the theorem is (sequentially) distributed in the group of these two researchers. 
\end{example}

% \begin{example}[Collaboration]\label{ex.col}
%     Suppose a patient is experiencing some rare symptom $p$. To know the cause ($q$ or $\neg q$), Doctor 1 in the local hospital suggests the patient first take a special fMRI scan ($a$) which is only available in the local hospital at the moment, and send the result to Doctor 2 in a hospital in another city, who is more experienced in examining the result from the scanner. Doctor 2 will then know the cause, depending on which different treatments ($b$ or $c$) should be performed by Doctor 2 to cure the patient.  Intuitively, Doctor 1 knows how to let Doctor 2 know how to cure the patient, although neither Doctor 1 nor Doctor 2 knows how to cure it without the help of the other. The situation is depicted below as a model with both epistemic relations (dotted lines labeled by $1,2$) and action relations (solid lines labeled by $a,b,c$). Note that only Doctor 1 can execute action $a$, and only Doctor 2 can perform $b$ or $c$. Reflexive epistemic arrows are omitted. If we do not write a propositional letter on a state, then it is false on that state. 
% \end{example}

This example shows that the distributed knowledge-how of a group may be based on a plan making use of actions inherited from its members or  subgroups. %In \cite{LiW24}, similar situations can be captured in an individual-based know-how framework, where we cannot express the group knows how to prove the theorem, but one knows how to let another one know how to prove the theorem. 

%The upshot of this example is that the principle \textit{

Yet another source of distributed knowledge-how is based on actions of a group that are \textit{decomposable} into \textit{independent} actions of its members and subgroups. It resembles the joint actions in the coalition-based approach, but not necessarily for the full set of agents. 
%can eliminate the uncertainty of the outcome of composing individual actions. 
% \begin{example}[Joint treatment]\label{ex.joint}
% Suppose a patient has two health problems represented by $p$ and $q$. Doctor $i$ knows how to cure $p$ by treatment $a$ but is not sure whether it can cure $q$, and Doctor $j$ knows how to treat $q$ by treatment $b$ but is not sure whether it can cure $p$. Then, assuming the treatments are independent (not affecting each other), by doing both $a$ and $b$ respectively, the doctors distributedly know how to cure both problems (making sure $\neg p\land\neg q$). 
% \end{example}
% $$\xymatrix{
% \neg p, q & \neg p, \neg q & p, \neg q \\
%                    & p,q\ar[lu]|a\ar[u]|a\ar@/^/[u]|b \ar[ru]|b\\  
% }$$

\medskip

\hspace{-1.5em}\begin{minipage}{0.7\linewidth}
\begin{example}[Joint treatment]\label{ex.joint}
Suppose a patient has two health problems represented by $p$ and $q$. Doctor $1$ knows how to cure $p$ by treatment $a$ but is not sure whether it can cure $q$, and Doctor $2$ knows how to treat $q$ by treatment $b$ but is not sure whether it can cure $p$. Then, assuming the treatments are independent (not affecting each other), by doing both $a$ and $b$ respectively, the doctors distributedly know how to cure both problems (making sure $\neg p\land\neg q$).
\end{example}
\end{minipage}
\begin{minipage}{0.34\linewidth}
$\xymatrix{
\neg p, q & \neg p, \neg q & p, \neg q \\
                   & p,q\ar[lu]|a\ar[u]|a\ar@/^/[u]|b \ar[ru]|b\\  
}$
\end{minipage}
\medskip

Compared to the group action in Example \ref{ex.extra}, the effect of the joint action in Example \ref{ex.joint} is determined and decomposable by the effects of the independent composing actions. As we will see later, the set of outcomes induced by such a joint action is the \textit{intersection} of the outcomes caused by each composing action.

From these three examples, we can summarize the following observations respectively: 
\begin{itemize}
\item[O1] A group may have some \textit{extra} irreducible actions that none of its proper subgroups can do.
\item[O2] The available actions of a group at least include all the actions of the subgroups. 
\item[O3] The group can perform decomposable \textit{joint actions}, whose outcomes are the intersection of the outcomes of the composing actions (by subgroups).
\end{itemize}
Thus, there are three types of group actions relevant for distributed knowledge-how: (I) irreducible group actions belonging to the whole group but not less; (II) actions inherited from the subgroups; (III) the joint actions computed as the ``intersections'' of the composing actions.   

We will formalize the above idea by first giving ``atomic'' group actions $A_G$ of type (I) for each group $G$ (including singleton groups as individuals). By O1, if $H\subsetneq G$ then $A_H\cap A_G=\emptyset.$ Next, we compute the closure $A^*_G$ (call it the set of distributed actions) recursively with respect to all the inherited actions of type (II) and the joint actions of type (III). In particular, the joint actions of group $G$ are composed of actions belonging to any non-singleton partial partition of the group, not just a tuple of actions from each member. This reflects the fact that a task force may be divided into subteams with their own know-how. We assume that for joint actions, each person can remain idle but cannot be used twice \textit{at the same time}\footnote{Otherwise, an agent may have the unrealistic ability to execute two actions at the same time. For example, if we assume $a$ and $b$ in Example \ref{ex.joint} are both actions for Doctor 1, then Doctor 1 can team up with herself, to execute $\lr{a,b}$ to guarantee the outcome $\neg p \land\neg q$, which may not be possible.}, thus requiring the partial partition (partition of a subset). 

Below is an example to illustrate the definition of group actions, where we write $A_{\{i\}}$ as $A_i$ for brevity and present the joint actions according to a fixed ordering of agents and groups. 
\begin{example}\label{ex.Astar}
Suppose there are two agents $1,2$, the atomic group actions of type (I) are defined as $A_{1}=\{a,b\}$, $A_{2}=\{c\}$,  $A_{\{1,2\}}=\{d\}$. Then the closures $A^*_G$ are computed as: $A^*_i=A_i$ for singleton sets $\{i\}$; $A^*_{\{1,2 \}}=A_{\{1,2\}}\cup (A_{1}\cup A_{2})\cup (A^*_{1}\times A^*_{2})=\{d, a,b,c, \lr{a,c}, \lr{b,c}\}$ where $A_{\{1,2\}}$ is the set of type (I) group actions, and $A_{1}\cup A_{2}$ is the set of inherited actions of type (II), and $(A^*_{1}\times A^*_{2})$ is the set of intersecting joint actions of type (III). Note that if there are more than two agents, then the actions of type (III) become complicated, e.g., $A^*_{\{1,2,3\}}$ should include $A^*_{\{1,2\}}\times A^*_{3}$, $A^*_{1}\times A^*_{\{2,3\}}$, $A^*_{\{1,3\}}\times A^*_{2}$, $A^*_{1}\times A^*_{2}$, $A^*_{2}\times A^*_{3}$, $A^*_{1}\times A^*_{3}$, and $A^*_1\times A^*_2\times A^*_3$, i.e., it includes the joint actions of any non-singleton partial partition of the group representing how the work can be distributed among subgroups.

% Suppose there are three agents $1,2,3$ and for each group $G\subseteq \{1,2,3\}$, the atomic actions of type 2 are defined as $A_{1}=\{a,b\}$, $A_{2}=\{c\}$,   $A_{3}=\{d\}$, $A_{\{1,2\}}=\{e\}$, $A_{\{2,3\}}=A_{\{1,3\}}=\emptyset$, and $A_{\{1,2,3\}}=\{f\}$. Then the closures $A^*_G$ are computed. $A^*_i=A_i$ for singleton sets $\{i\}$. $A^*_{\{1,2 \}}=A_{\{1,2\}}\cup A^*_{1}\cup A^*_{2}\cup (A^*_{1}\times A^*_{2})=\{e, a,b,c, \lr{a,c}, \lr{b,c}\}$ where $A_{\{1,2\}}$ is the set of type (II) group actions, and $A^*_{1}\cup A^*_{2}$ is the set of inherited actions of type (I), and $(A^*_{1}\times A^*_{2})$ is the set of intersecting joint actions of type (III). Similarly $A^*_{1,3}=\{a,b,d,\lr{a,d},\lr{b,d}\}$ and $A^*_{2,3}=\{c, d, \lr{c,d}\}$. 
\end{example}

Distributed knowledge-how is then based on multi-step strategies that make use of actions in $A^*_{G}$ and distributed knowledge-that. We say a (possibly singleton) group $G$ (distributedly) knows how to achieve $\phi$ if there is a strategy $\sigma$ making use of $A^*_G$ such that it is distributed knowledge of $G$ that $\sigma$ will terminate and guarantee $\phi$. Consider the following example extending Example~\ref{ex.joint}:
\begin{example}[Distributed knowledge-that and -how]
    On top of Example~\ref{ex.joint}, we further assume that the two doctors have some uncertainty regarding the patient's problems: Doctor 1 is sure the patient has $p$ but not sure about $q$, and Doctor 2 is sure the patient has $q$ but not sure about $p$.  Furthermore, if $p\land \neg q$ then Doctor 1 should do treatment $c$ (not $a$) to cure $p$, and if $\neg p\land  q$ Doctor 2 should do treatment $d$ (not $b$) to cure $q$.  The situation can be formalized in the following model where $A_1=\{a,c\}$, $A_2=\{b, d\}$ and $A_{\{1,2\}}=\emptyset$. The dotted lines represent the epistemic indistinguishability relations of the agents, and we omit the reflexive arrows.
$$\xymatrix{
  \neg p,\neg q & \neg p, q \ar@{.}[r]|1& \neg p, \neg q \ar@{.}[r]|2& p, \neg q & \neg p, \neg q \\
                   & p, \neg q\ar[ul]|c & p,q\ar@{.}[l]|1\ar@{.}[r]|2\ar[lu]|a\ar[u]|a\ar@/^/[u]|b \ar[ru]|b&  \neg p, q\ar[ur]|d\\  
}$$
Intuitively, at world $(p,q)$, Doctor 1 does not know how to achieve $\neg p$ as she does not have a uniform move to do so over the indistinguishable worlds $(p,\neg q)$ and $(p,q)$. Similarly, Doctor 2 does not know how to achieve $\neg q$. However, the group of the two doctors $1$ and $2$ distributedly knows how to make sure $\neg p\land\neg q$ by doing the joint action $\lr{a,b}\in A^*_{\{1,2\}}$ based on their distributed knowledge, which leaves them with a sole possible world $(p, q)$.   
\end{example}

%\begin{itemize}
%    \item group knowledge of know-that
 %   \item Know-that vs know-wh
  %  \item group knowledge of know-wh
   % \item group knowledge of know-how
%    \item existing work and its limitations \noteYW{Not a game setting.}
%\item examples (highlights):
% \begin{itemize}
%     \item multi-step sequential plan using different agents' abilities (collaborations)
%     \item intersection actions (tennis double-to-one?)
%     \item group action goes beyond joint individual actions (moving a heavy table in groups to emphasise the partial partition)
%     \item group strategies using subgroup abilities (arts of leadership)
%     \item better to show some intuitive derivations, which can be formally captured later on. 
% \end{itemize}
%    \item goal of the paper
    % \item the ingredients of the framework: 
    % \begin{itemize}
    %     \item distributed knowledge-that
    %     \item group action that individuals cannot do
    %     \item closure under partial partition of subgroup joint actions
    %     \item $\exists\Box$
    %     \item Proof techniques of unraveling.
    % \end{itemize}
%    \item a summary of contributions
    % \item structure of the paper
%\end{itemize}

In this paper, we propose such a mixed framework of distributed knowing how, based on the group actions explained above and the notion of strategies proposed in \cite{fervari2017strategically}. The main technical contribution is a sound and complete proof system of the logic with intuitive axioms, whose completeness proof is based on the unraveling of both the epistemic relations and the action transitions. One crucial difference from the systems of coalition-based know-how logics is that the \textit{cooperation axiom} that plays a key role there is no longer valid here due to the multi-step setting. 
%The absence of cooperation axioms also separates distributed knowledge-how from distributed knowledge-that to be discussed later. 
%$\K_{G_1}\land \K_{G_2} 
In the following, Section \ref{sec.lands} lays out the language and semantics for our logic of distributed knowledge-how. Section \ref{sec.axiom} gives a proof system. In Section \ref{sec.comp}, we prove the completeness. Finally, we conclude and discuss future work in Section \ref{sec.conc}.

\section{Language and semantics} \label{sec.lands}

We present the following syntax and semantics given the set $P$ of proposition letters and the finite set $I=\{i_0,\cdots,i_\mathfrak{n}\}$ of agents.\footnote{The requirement for the finiteness of the set $I$ of agents is solely for the convenience of denoting distributed actions and has no effect on the logic (valid formulas).}

\begin{definition}[Language]\label{defLang}
    The language $\DKH$ is defined by the following BNF where $p\in P$ and $G$ is a subset of $I$ called group:
    \[\phi::=\top\mid p\mid\neg\phi\mid (\phi\land\phi)\mid \K_G\phi \mid \Kh_G\phi \]
\end{definition}

\begin{definition}[Model]\label{defmodel}
    A model $\M$ is a tuple $\left\la S, \{\sim_i\}_{i\in I}, \{A_G\}_{G\subseteq I}, \{\rel{a}\mid a\in\bigcup_{G\subseteq I}A_G\}, V\right\ra$, where
  \begin{itemize}
  \item $S$ is a set of states;
  \item $\sim_i$ is an equivalence relation on $S$ for each $i\in I$;
  \item $A_G$ is a set of atomic group actions for each $G\subseteq I$ such that: $G\subsetneq H$ implies $A_G\cap A_H=\emptyset$;  $A_\emptyset=\emptyset$;
  \item $\rel{a}$ is a binary relation on $S$ for each $a\in\bigcup_{G\subseteq I}A_G$, called the transition relation of $a$;
  \item $V:P\to\mathcal{P}(S)$ is a valuation function.
  \end{itemize}
\end{definition}

We use $A^+_{ G}$ to denote $\bigcup_{G'\subseteq G}A_{G'}$ for any $G\subseteq I$, thus $A_I^+$ is the set of all atomic group actions and $G\subseteq H$ implies $A^+_{ G}\subseteq A^+_{ H}$. Note that the actions only appear in the models, but not in the language, and the transitions can be non-deterministic. The models in this paper closely align with those in \cite{LiW24} with some notable distinctions. First, instead of the set of actions of each agent, we have the set of actions for each group. Moreover, as mentioned in the introduction, we will compute the closures of these atomic action sets, which serve as the basis for distributed knowledge-how. Second, the requirement that $G\subsetneq H$ implies $A_G\cap A_H=\emptyset$, reflects the intuition behind the type (I) group actions that are only available to the whole group and not less.\footnote{Although this requirement is conceptually important, technically, we think it can be dropped without changing the logic, because the closures of action sets are the same with or without the requirement. To tell the difference in logic, we may need to talk about concrete actions in the language.}%clousure

% Another notable issue is that, for any group $G$, its atomic action set satisfies $\bigcup_{i\in G}A_i\subseteq A_G$, which follows as a direct consequence of monotonicity; however, the converse inclusion does not necessarily hold. This implies that for a multi-agent group, besides all atomic actions available to agents in the group, it may possess some extra atomic actions. Intuitively, these extra atomic actions emerge from the collaboration within subgroups of the group, as discussed in the previous section. Note that extra atomic actions can also be available to a single agent. We can grasp the intuition behind this through the following example: For ordinary people, it might take two individuals to lift a heavy table, and thus they need to cooperate to generate the extra atomic action "lifting"; however, for a strongman, one person alone is sufficient to lift a heavy table. Moreover, we do not require any properties for the transition relations of extra atomic actions to keep a more general setting, but shortly we will introduce distributed actions, which can be viewed as a specific kind of extra actions intuitively.
% Now we define distributed indistinguishability relations based on individual indistinguishability relations, as well as distrubuted action sets based on group action sets.
\begin{definition}[Distributed indistinguishability]\label{group indistinguishability}
  For each group $G$, $\sim_G$ is a binary relation on $S$ such that $s \sim_G t$ iff $s \sim_i t$ for each $i\in G$.
\end{definition}

It is routine to show that distributed indistinguishability relations are equivalence relations.

\begin{proposition}
    For any group $G$, $\sim_G$ is an equivalence relation on $S$.
\end{proposition}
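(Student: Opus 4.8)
The plan is to verify the three defining properties of an equivalence relation---reflexivity, symmetry, and transitivity---directly from Definition~\ref{group indistinguishability}, leveraging the fact that each individual relation $\sim_i$ is already an equivalence relation by the definition of a model. The key observation is that $\sim_G$ is nothing but the intersection $\bigcap_{i\in G}\sim_i$ of the individual relations, viewing each relation as a set of pairs, and it is a standard fact that any intersection of equivalence relations is again an equivalence relation.

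Concretely, for reflexivity I would fix an arbitrary state $s$; since each $\sim_i$ is reflexive we have $s\sim_i s$ for every $i\in G$, so by the definition $s\sim_G s$. For symmetry, supposing $s\sim_G t$, the definition gives $s\sim_i t$ for all $i\in G$; applying the symmetry of each $\sim_i$ yields $t\sim_i s$ for all $i\in G$, whence $t\sim_G s$. For transitivity, assuming $s\sim_G t$ and $t\sim_G u$, I would note that for each $i\in G$ we have both $s\sim_i t$ and $t\sim_i u$, so transitivity of $\sim_i$ gives $s\sim_i u$; as this holds for every $i\in G$, we conclude $s\sim_G u$.

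There is no real obstacle here; the argument is a routine unfolding of the definition. The only point worth flagging is the degenerate case $G=\emptyset$: then the condition ``$s\sim_i t$ for each $i\in G$'' is vacuously satisfied, so $\sim_\emptyset$ coincides with the full relation $S\times S$, which is trivially an equivalence relation. Thus the statement holds uniformly for all groups $G\subseteq I$, including the empty group, which is why the surrounding text can dismiss it as routine.
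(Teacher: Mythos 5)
Your proof is correct and is exactly the routine verification the paper has in mind (the paper states the proposition without proof, calling it routine): each property of $\sim_G$ follows pointwise from the corresponding property of the individual equivalence relations $\sim_i$, since $\sim_G=\bigcap_{i\in G}\sim_i$. Your remark on the degenerate case $G=\emptyset$ also matches the paper's own subsequent observation that $s\sim_\emptyset t$ holds for all $s,t\in S$, so $[s]_\emptyset=S$.
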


Therefore, we can define equivalence classes w.r.t. $\sim_G$.

\begin{definition}\label{Equivalence class}
For any group $G$ and $s\in S$, we use $[s]_G$ to denote the equivalence class $\{t\in S\mid s\sim_G t\}$, and use $[S]_G$ to denote the collection of all the equivalence classes on $S$ w.r.t. $\sim_G$.
\end{definition}

Note that by definition $s\sim_\emptyset t$ for any $s,t\in S$, which implies that $[s]_\emptyset=S$ for any $s\in S$.

% \begin{definition}[Distributed action]
%     For each $G\subseteq I$, $A_G^*:=\{\la a_0,\cdots, a_n\ra\in A_{G_0}\times\cdots\times A_{G_n}\mid \{G_0,\cdots,G_n\} \\ \mbox{ is a partial partition of }G\}$ is the set of distributed actions of $G$.\footnote{A partial partition of $G$ is a partition of a subset of $G$; In the absence of ambiguity, $\la\ra$ can be omitted if there is only one element in it.} It follows that $A_\emptyset^*=\emptyset$.
% \end{definition}

We will now define the set of \textit{distributed actions} $A^*_G$ as a closure of $A_G$ recursively, which includes the three types of group actions as discussed in the introduction and illustrated in Example \ref{ex.Astar}. To make the technical definition concise, we divide the  actions in $A^*_G$ into two parts: (1) atomic group actions of $G$ and its subgroups, and (2) joint actions in the form of $\lr{d_0,\cdots,d_n}$.  Since the order of $d_0,\cdots,d_n$ is insignificant in our setting, we fix an ordering $\prec$ below on \textit{mutually disjoint groups} to avoid generating redundant joint actions, which will not affect the logic technically. Recall that the set of all agents is $I=\{i_0,\cdots,i_\mathfrak{n}\}$. For any nonempty group $G=\{i_{n_0},\cdots,i_{n_k}\}$, let $\min G=\min\{n_0,\cdots,n_k\}$ and let $G\prec H$ iff $\min G<\min H$. This will always give us a strict ordering over a set of mutually disjoint groups of agents. 
\begin{definition}[Distributed actions]
   For each group $G$, its distributed action set $A_G^*:=A^+_{ G}\cup \\ \{\lr{d_0,\cdots, d_n}\in A_{G_0}^*\times\cdots\times A_{G_n}^*\mid \{G_0,\cdots,G_n\}\mbox{ is a non-trivial partial partition of }G\mbox{ and }G_0\prec\cdots\prec G_n \} $, where a non-trivial partial partition of a set $X$ is a partition of a (not necessarily proper) subset of $X$ such that it is not a singleton set. It follows that $A_\emptyset^*=\emptyset$ and $A_i^*=A_i$ for each $i\in I$.
\end{definition}

% Any action in the distributed action set of $G$ is defined as a composition of actions of subgroups in $G$. It is crucial that these subgroups must be disjoint, because when the group decides on a distributed action, an agent cannot execute two actions simultaneously.

% As discussed previously, monotonicity is preserved in distributed action sets. In addition, for distributed knowledge we also have a similar monotonicity result.

% The distributed action set of a group includes group actions of its subgroups and combinations of distributed actions of its disjoint subgroups. The disjointness is crucial because an agent cannot execute two actions simultaneously. 
Note that joint actions of subgroups of a group $G$ are also inherited because a non-trivial partial partition of a subgroup is also a non-trivial partial partition of $G$. Thus we have the following monotonicity for both distributed indistinguishability and distributed action sets.

\begin{proposition}\label{propMonoAct}
    $G\subseteq H$ implies $\sim_H\subseteq\sim_G$ and $A_G^*\subseteq A_H^*$.
\end{proposition}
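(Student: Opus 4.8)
The plan is to establish the two conjuncts separately, since they concern unrelated definitions and neither requires the other.

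For $\sim_H\subseteq\sim_G$ I would simply unfold Definition \ref{group indistinguishability}. Suppose $G\subseteq H$ and $s\sim_H t$; then $s\sim_i t$ for every $i\in H$ by definition, and since $G\subseteq H$ this in particular gives $s\sim_i t$ for every $i\in G$, which is exactly $s\sim_G t$. This direction is immediate and needs nothing beyond the definition.

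For $A_G^*\subseteq A_H^*$ I would argue by cases on the two kinds of elements picked out by the defining equation of $A_G^*$. Take any $x\in A_G^*$. If $x\in A^+_G=\bigcup_{G'\subseteq G}A_{G'}$, then because $G\subseteq H$ every $G'\subseteq G$ also satisfies $G'\subseteq H$, so $A^+_G\subseteq A^+_H\subseteq A_H^*$ and hence $x\in A_H^*$. Otherwise $x=\lr{d_0,\cdots,d_n}$ with $d_j\in A^*_{G_j}$ for each $j$, where $\{G_0,\cdots,G_n\}$ is a non-trivial partial partition of $G$ with $G_0\prec\cdots\prec G_n$. The crucial observation is that each closure $A^*_{G_j}$ depends only on the group $G_j$ itself and not on any ambient group, so the membership $d_j\in A^*_{G_j}$ is unaffected by replacing $G$ with $H$. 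Moreover, since the union of the $G_j$ is a subset of $G$ and $G\subseteq H$, that same family $\{G_0,\cdots,G_n\}$ is a non-trivial partial partition of $H$, with the identical ordering $G_0\prec\cdots\prec G_n$. Hence $x$ satisfies the defining condition for the joint-action part of $A_H^*$, giving $x\in A_H^*$.

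The only point requiring care — and what keeps this essentially free rather than an induction — is recognizing that the recursion defining $A_G^*$ feeds on the closures $A^*_{G_j}$ of strictly smaller groups, which are fixed objects independent of whether each $G_j$ is regarded as a subgroup of $G$ or of $H$. Consequently no inductive hypothesis on the components is needed, and the inclusion for joint actions follows directly from the upward transfer of partial partitions already noted before the proposition. I expect no genuine obstacle here: the statement is a bookkeeping consequence of the definitions.
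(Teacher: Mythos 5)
Your proof is correct and follows essentially the same route as the paper's: both reduce the action-set inclusion to the facts that $A^+_G\subseteq A^+_H$ and that a non-trivial partial partition of $G$ is also one of $H$, with the component memberships $d_j\in A^*_{G_j}$ unaffected. The only cosmetic difference is that the paper first splits on whether $G$ is empty, a singleton, or multi-agent, whereas your uniform two-case analysis on the form of the element (inherited vs.\ joint) covers all of these at once.
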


\begin{proof}
    Suppose that $G\subseteq H$. $\sim_H\subseteq\sim_G$ is straightforward by definition. We show $A_G^*\subseteq A_H^*$ as follows. If $G=\emptyset$, then $A_G^*=\emptyset$, then we have $A_G^*\subseteq A_H^*$. If $G=\{i\}$ where $i\in I$, then $A_G^*=A_i\subseteq A^+_{ H}\subseteq A_H^*$. If $G$ is a multi-agent group, then suppose that $d\in A_G^*$. If $d\in A^+_{ G}$, then since $ A^+_{ G}\subseteq A^+_{ H}$, we have $d\in A^+_{ H}\subseteq A_H^*$. If $d=\la d_0,\cdots, d_n\ra\in A_{G_0}^*\times\cdots\times A_{G_n}^*$ where $\{G_0,\cdots,G_n\}$ is a non-trivial partial partition of $G$ and $G_0\prec\cdots\prec G_n$, since $G\subseteq H$, $\{G_0,\cdots,G_n\}$ is also a non-trivial partial partition of $H$. Then $\la d_0,\cdots, d_n\ra\in A_H^*$. Therefore $A_G^*\subseteq A_H^*$.
\end{proof}

By Proposition \ref{propMonoAct}, $A_I^*$ is the set of all distributed actions and $A_I^+\subseteq A_I^*$. We now define distributed transitions according to the observation O3 in the introduction.
\begin{definition}[Distributed transition]
    For each $d=\la d_0,\cdots, d_n\ra\in A_I^*$, we define the distributed transition relation $\rel{d}:=\bigcap_{0\leq k\leq n}\rel{d_k}$.
\end{definition}
For any nonempty $X,Y\subseteq S$ and $d\in A_I^*$, we use $X\rel{d}Y$ to indicate that there is some $s\in X$ and some $t\in Y$ such that $s\rel{d}t$. A distributed transition relation could be empty, which indicates that there are conflicting actions among subgroups. For example, the actions of opening and closing the same door cannot constitute an executable distributed action.

Although a joint action $d=\la d_0,\cdots, d_n\ra\in A_I^*$ may well use other joint actions as $d_i$, they can be in effect reduced to the intersections of atomic transitions, e.g., the transitions for $\lr{\lr{a,b},c}$ are the same as the transitions for $\lr{a,b,c}$, as they are computed by taking the intersection of $\rel{a}$, $\rel{b}$, and $\rel{c}$. The following observation plays an important role in the completeness proof.
\begin{proposition}\label{propAlterDef}
    For any nonempty group $G$ and $d\in A_G^*$, there exist $a_0'\in A_{G_0'},\cdots,a_m'\in A_{G_m'}$ such that $\{G_0',\cdots,G_m'\}$ is a partial partition of $G$ and $\rel{d}=\bigcap_{0\leq k\leq m} \rel{a_k'}$.
\end{proposition}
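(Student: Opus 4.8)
The plan is to induct on the recursive structure of the distributed action $d$ — equivalently on $|G|$ — following the two cases in the definition of $A_G^*$. I would set up the induction so that in the joint-action case each component lives in a strictly smaller group, which makes the recursion well-founded.

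First I would dispose of the base case, where $d\in A^+_G$, i.e. $d\in A_{G'}$ for some $G'\subseteq G$. Here I take $m=0$, set $G_0'=G'$ and $a_0'=d$; then $\{G_0'\}$ is a (trivial, singleton) partial partition of $G$ and $\rel{d}=\rel{a_0'}$ holds immediately, so nothing further is needed.

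For the inductive case, $d=\langle d_0,\ldots,d_n\rangle$ with $\{G_0,\ldots,G_n\}$ a non-trivial partial partition of $G$ and each $d_i\in A^*_{G_i}$. Since the partition is non-trivial there are at least two blocks, so each $G_i$ is a nonempty \emph{proper} subset of $G$ and the induction hypothesis applies to every $d_i$: there are atomic actions $a_{i,0}',\ldots,a_{i,m_i}'$ with $a_{i,k}'\in A_{G_{i,k}'}$, where $\{G_{i,0}',\ldots,G_{i,m_i}'\}$ is a partial partition of $G_i$ and $\rel{d_i}=\bigcap_{k}\rel{a_{i,k}'}$. By the definition of the distributed transition relation, $\rel{d}=\bigcap_{i}\rel{d_i}=\bigcap_{i}\bigl(\bigcap_{k}\rel{a_{i,k}'}\bigr)=\bigcap_{i,k}\rel{a_{i,k}'}$, where the last equality just flattens the nested intersection. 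Re-indexing the family $\{a_{i,k}'\}_{i,k}$ as $a_0',\ldots,a_m'$ then yields the required atomic actions.

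The one step that needs care — and the only real content beyond bookkeeping — is checking that the merged index family $\{G_{i,k}'\}_{i,k}$ is a partial partition of $G$. For each fixed $i$ the groups $G_{i,k}'$ are pairwise disjoint nonempty subsets of $G_i$ (by the hypothesis that they partially partition $G_i$), while for distinct $i$ the blocks $G_i$ are themselves pairwise disjoint, so subsets drawn from different $G_i$'s are automatically disjoint. Hence all the $G_{i,k}'$ are pairwise disjoint nonempty subsets of $\bigcup_i G_i\subseteq G$, which is exactly what it means to be a partial partition of $G$. The strict inclusion $G_i\subsetneq G$ guaranteed by non-triviality of the partition is what keeps the induction well-founded.
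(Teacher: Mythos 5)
Your proof is correct and follows essentially the same route as the paper's: induction on $|G|$, with the $d\in A^+_G$ case handled directly as a singleton partial partition, and the joint-action case handled by applying the induction hypothesis to each component $d_i$ (valid since non-triviality makes each block $G_i$ a proper subset of $G$), then merging the resulting partial partitions and flattening the intersection of transition relations. The only cosmetic difference is that the paper takes the singleton group $G=\{i\}$ as its base case and treats $d\in A^+_G$ inside the inductive step, whereas you fold both into one atomic case; you also spell out the pairwise-disjointness check for the merged family, which the paper leaves implicit.
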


\begin{proof}
    We show by induction on the number of agents in $G$. If $G=\{i\}$ where $i\in I$, then $d\in A_i$ and $\{\{i\}\}$ is a partial partition of $\{i\}$. Assume that $G$ is a multi-agent group. If $d\in A^+_{ G}$, then there exists $G'\subseteq G$ such that $d\in A_{G'}$,  then $\{G'\}$ is a partial partition of $G$. If $d=\la d_0,\cdots, d_n\ra\in A_{G_0}^*\times\cdots\times A_{G_n}^*$ where $\{G_0,\cdots,G_n\}$ is a non-trivial partial partition of $G$ and $G_0\prec\cdots\prec G_n$, then $n>0$ and $\emptyset\ne G_k\subsetneq G$ for $0\leq k\leq n$. By induction hypothesis, for $0\leq k\leq n$ there exist $a_{0}^k\in A_{G_{0}^k},\cdots,a_{m_k}^k\in A_{G_{m_k}^k}$ such that $\{G_{0}^k,\cdots,G_{m_k}^k\}$ is a partial partition of  $G_k$ and $\rel{d_k}=\bigcap_{0\leq j\leq m_k} \rel{a_j^k}$. Then $\{G_{0}^0,\cdots,G_{m_0}^0,\cdots,G_{0}^n,\cdots,G_{m_n}^n\}$ is a partial partition of $G$. Since $\rel{\la d_0,\cdots, d_n\ra}=\bigcap_{0\leq k\leq n}\rel{d_k}$, the proposition holds.
\end{proof}

% We now introduce the notions of trivial and non-trivial distributed action, which will be helpful in the completeness proof. 

% \begin{definition}[Trivial distributed action]
%     For any $d\in {A^+_I}^*$, if for any group $G$ such that $d\in A_G^*$, there exist $G'\subseteq G, a\in A_{G'}$ such that $\rel{d}=\rel{a}$, then we call $d$ a trivial distributed action, otherwise we call $d$ a non-trivial distributed action.
% \end{definition}

% If $d$ is a non-trivial distributed action, then there exists a group $G$ such that $d\in A_G^*$ and $d$ cannot be reduced to any single group action $a$ of subgroups of $G$. But by Proposition \ref{propAlterDef}, $d$ can always be reduced to some group actions $a_0,\cdots,a_m$ of subgroups of $G$. As a result, $m>0$ must hold. Therefore we have the following corollary:

Observe that in the last case of the proof, $\lvert\{G_{0}^0,\cdots,G_{m_0}^0,\cdots,G_{0}^n,\cdots,G_{m_n}^n\}\rvert>1$ because $n>0$. Therefore, we have the following corollary, which will be helpful for the completeness proof:

\begin{corollary}\label{CorAlterDef}
    For any multi-agent group $G$ and $d\in A_G^*\setminus A_G^+$, there exist $a_0\in A_{G_0},\cdots,a_m\in A_{G_m}$ such that $m>0$, $\{G_0,\cdots,G_m\}$ is a partial partition of $G$ and $\rel{d}=\bigcap_{0\leq k\leq m} \rel{a_k}$.
\end{corollary}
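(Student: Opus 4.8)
The plan is to read off the stronger conclusion $m>0$ directly from the proof of Proposition~\ref{propAlterDef}, after first observing that membership in $A_G^*\setminus A^+_{ G}$ forces $d$ to be a genuine joint action rather than an atomic (inherited) one. The whole point of the corollary is to isolate the case that yields a partial partition with \emph{more than one} block, and the definition of $A_G^*$ tells us precisely which elements those are.

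First I would unfold the definition of $A_G^*$. Since $A_G^*=A^+_{ G}\cup\{\lr{d_0,\cdots,d_n}\mid\ldots\}$, any $d\in A_G^*\setminus A^+_{ G}$ must have the joint form $d=\lr{d_0,\cdots,d_n}\in A_{G_0}^*\times\cdots\times A_{G_n}^*$, where $\{G_0,\cdots,G_n\}$ is a \emph{non-trivial} partial partition of $G$ with $G_0\prec\cdots\prec G_n$. Because a non-trivial partial partition is by definition not a singleton, this already delivers $n>0$, so there are at least two disjoint nonempty subgroups among $G_0,\cdots,G_n$.

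Next I would apply Proposition~\ref{propAlterDef} to each $G_k$ (each of which is nonempty, being a block of a partition): this yields a partial partition $\{G_0^k,\cdots,G_{m_k}^k\}$ of $G_k$ together with atomic actions $a_j^k\in A_{G_j^k}$ such that $\rel{d_k}=\bigcap_{0\leq j\leq m_k}\rel{a_j^k}$. Since the $G_k$ are pairwise disjoint, the combined collection $\{G_0^0,\cdots,G_{m_0}^0,\cdots,G_0^n,\cdots,G_{m_n}^n\}$ is again a partial partition of $G$, and relabelling its atomic actions as $a_0,\cdots,a_m$ gives $\rel{d}=\bigcap_{0\leq k\leq n}\rel{d_k}=\bigcap_{0\leq k\leq m}\rel{a_k}$, exactly as in the proposition.

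The only substantive addition over the proposition is the cardinality bound $m>0$, which I would establish by a simple count: each of the $n+1\geq 2$ subgroups $G_k$ is nonempty, so its partial partition is nonempty and contributes at least one atomic action $a_0^k$; hence the total number of atomic actions is at least two, i.e.\ $m+1\geq 2$ and so $m>0$. This is precisely the observation already recorded immediately after the proof of Proposition~\ref{propAlterDef}, so I expect no real obstacle — the entire corollary amounts to confirming that the excluded case $d\in A^+_{ G}$ was the only one capable of producing a single-block partition.
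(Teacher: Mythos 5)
Your proof is correct and takes essentially the same route as the paper: the paper obtains the corollary by observing that any $d\in A_G^*\setminus A_G^+$ must fall into the last (joint-action) case of the proof of Proposition~\ref{propAlterDef}, where non-triviality of the partial partition (i.e., $n>0$) forces the combined partition $\{G_0^0,\cdots,G_{m_n}^n\}$ to have more than one block. Your only cosmetic difference is that you invoke Proposition~\ref{propAlterDef} as a black box on each component $d_k$ instead of inspecting the inductive step of its proof, which is equally valid.
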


The following definitions are generalizations of the counterparts in \cite{fervari2017strategically}.

\begin{definition}[Executability]
    For any nonempty $X\subseteq S$, we say that $d\in A_I^*$ is executable on $X$, if for each $s\in X$ there exists $t$ such that $s\rel{d}t$.
\end{definition}
\begin{definition}[Strategy]
  A strategy of group $G$ is a partial function $\sigma_G:[S]_G \to A_G^*$ such that $\sigma_G([s]_G)$ is executable on $[s]_G$. Particularly, for each group, the empty function is also a strategy, called the empty strategy.
\end{definition}
By definition, the only strategy for the empty group is the empty strategy.
\begin{definition}[Execution]\label{Execution}
    Given a strategy $\sigma_G$ of group $G$ w.r.t. a model $\M$, a \emph{possible execution} of $\sigma_G$ is a possibly infinite nonempty sequence of equivalence classes $\delta=[s_0]_G[s_1]_G\cdots$ such that $[s_j]_G\xrightarrow{\sigma_G([s_j]_G)}[s_{j+1}]_G$ for all $0\leq j<\lvert\delta\rvert -1$. If the execution is a finite sequence $[s_0]_G\cdots [s_n]_G$, we call $[s_n]_G$ the leaf-node, and $[s_j]_G(0\leq j<n)$ an inner-node w.r.t. this execution. If it is infinite, then all $[s_j]_G(j\in\mathbb{N})$ are inner-nodes. A possible execution of $\sigma_G$ is \emph{complete} if it is infinite or its leaf-node is not in $\Dom(\sigma_G)$. We use $\CELeafN(\sigma_G, s)$ to denote the set of all leaf-nodes of all complete executions of $\sigma_G$ starting from $[s]_G$, and $\CEInnerN(\sigma_G, s)$ to denote the set of all inner-nodes of all complete executions of $\sigma_G$ starting from $[s]_G$.
  \end{definition}
Now we formally define the semantics of know-how $\Kh_G\phi$ which intuitively says that there is a strategy $\sigma_G$ such that $G$ distributedly knows that $\sigma$ will terminate and guarantee $\phi$.
\begin{definition}[Semantics]\label{Semantics}
  Given a model $\M$, for any state $s\in S$ and any formula $\varphi\in\DKH$
  %let relation $\M,s \vDash \varphi$ be defined as follows:
  \begin{center}
    \begin{tabular}{|lcl|}
    \hline
    $\M,s \vDash \top$ & & always holds \\
    $\M,s \vDash p$ & iff & $s \in V(p)$, where $p \in P$ \\
    $\M,s \vDash \neg \varphi$ & iff & $\M,s \nvDash \varphi$ \\
    $\M,s \vDash \varphi \land \psi$ & iff & $\M,s \vDash \varphi$  and  $\M,s \vDash \psi$ \\
    $\M,s \vDash \K_G \varphi$ & iff & $\M,s' \vDash \varphi$ for all $s' \in [s]_G$ \\
    $\M,s \vDash \Kh_G \varphi$ & iff & 
    \begin{tabular}[t]{@{}l@{}}
        there is a strategy $\sigma_G$ of $G$ such that: \\
        1. $[t]_G \subseteq \llrr{\varphi}$ for all $[t]_G \in \CELeafN(\sigma_G, s)$, and \\
        2. all its complete executions starting from $[s]_G$ are finite.
    \end{tabular}\\
        \hline
\end{tabular}
\end{center}
    where $\llrr{\phi}=\{s\in S\mid \M,s\vDash\phi\}$.
\end{definition}

    % \begin{figure}[H]
    %   \centering
    %     \begin{tikzpicture}[->,>=stealth',shorten >=1pt,auto,node distance=2.5cm,
    %                         thick,main node/.style={circle,draw,font=\sffamily\Large\bfseries}]
        
    %       % Nodes
    %       \node[main node] (s1) {$s_1$};
    %       \node[main node] (s2) [right=2cm of s1, yshift=1cm] {$s_2$};
    %       \node[main node] (s3) [right=2cm of s1, yshift=-1cm] {$s_3$};
        
    %       % Arrows
    %       \path[every node/.style={font=\sffamily\large}]
    %         (s1) edge node {$a_i$} (s2)
    %         (s1) edge node {$a_j$} (s3);
            
    %       % Label
    %       \node[right of=s2,node distance=0.8cm,font=\Large] {$p$};
    %       \node[right of=s3,node distance=0.8cm,font=\Large] {$q$};
        
    %     \end{tikzpicture}
    %     \caption{Cooperation is not valid}
    % \label{figCoopNotValid}
    %     \end{figure}
% We illustrate this by the following example. Suppose that there are three states with $s_1\rel{a_i}s_2,s_1\rel{a_j}s_3,V(p)=\{s_2\},V(q)=\{s_3\}$. Two agents $i$ and $j$ can distinguish all states and the group action sets are $A_i=\{a_i\},A_j=\{a_j\},A_{ij}=\emptyset$. Then clearly $s_1\vDash\Kh_i p,s_1\vDash\Kh_j q$ but $s_1\nvDash\Kh_{ij}(p\land q)$. 
Note that when $G$ is a singleton, the semantics of $\Kh_G$ is exactly as in \cite{fervari2017strategically}. It is also worth noting that the key axiom Cooperation:\footnote{An alternative form is $\Kh_G(\phi\to\psi)\to(\Kh_H\phi\to\Kh_{G\cup H}\psi)$.} $\Kh_G\phi\land\Kh_H\phi\to\Kh_{G\cup H}(\phi\land \psi)$  ($G$ and $H$ are disjoint) in the coalition-based approach such as \cite{naumov2018together} is \textit{not} valid in our setting, which demonstrates that our mixed framework is by no means a trivial combination. There are two reasons for the invalidity. First of all, as our know-how is based on \textit{multi-step} strategies, the strategies behind the know-how of $G$ and $H$ may not be synchronizable in the number of steps of reaching the goals, which is fundamentally different from the one-step coalition-based setting where joint actions always end at the same time. Moreover, even if we restrict to one-step actions, the distributed transition relations (as intersections) can be empty, that is, joint actions may be non-executable, which is different from the setting in \cite{naumov2018together} and makes it uncertain whether the two actions can be combined into an executable joint action. 
%Thus, even if joint actions are always executable, the goals $\phi\to\psi$ and $\phi$ might be achieved at different nodes during an execution. 

% By adhering to the nontermination principle and restricting to one-step plans, for any disjoint groups $G$ and $H$, the plan of $G$ achieving $\phi\to\psi$ and the plan of $H$ achieving $\phi$ can always be aggregated into an executable plan of $G\cup H$ fulfilling $\psi$. 

Moreover, the invalidity also shows that distributed knowledge-how cannot be conceptualized as simply putting together individual/subgroup knowledge-how (and their consequences), which separates it from distributed knowledge-that, where Cooperation is valid.
%the logical consequence of the distributed knowledge-how of groups (even if they are disjoint) is not necessarily the distributed knowledge-how of their union. In contrast, the logical consequence of the distributed knowledge-that of groups is always the distributed knowledge-that of their union. Therefore, the absence of the Cooperation axiom also separates distributed knowledge-how from distributed knowledge-that. 
%\noteYW{Discuss the difference between distributed know-how and distributed know-that.} \red{Need some reference?}

As a technical remark, the empty group provides us with additional expressive power of a \textit{universal modality}, which will play a vital role in the axiomatization proposed in the next section.
% The proof of the following result is provided in Appendix \ref{app.khempty} due to a lack of space.
\begin{proposition}\label{Khempty}
    $\M,s\vDash\K_\emptyset\phi$ iff $\M,s\vDash\Kh_\emptyset\phi$ iff $\M,t\vDash\phi$ for any $t\in S$.
\end{proposition}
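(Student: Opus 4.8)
The plan is to prove the three-way equivalence by reducing each of the two left-hand statements, $\M,s\vDash\K_\emptyset\phi$ and $\M,s\vDash\Kh_\emptyset\phi$, to the single global-truth condition $\M,t\vDash\phi$ for all $t\in S$. The fact driving both reductions is the observation recorded just after Definition~\ref{Equivalence class}: since $s\sim_\emptyset t$ for all $s,t\in S$, we have $[s]_\emptyset=S$ for every $s\in S$.

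For the equivalence of $\M,s\vDash\K_\emptyset\phi$ with global truth, I would simply unfold the semantic clause for $\K_G$ at $G=\emptyset$. By definition $\M,s\vDash\K_\emptyset\phi$ holds iff $\M,s'\vDash\phi$ for all $s'\in[s]_\emptyset$, and since $[s]_\emptyset=S$ this says exactly that $\phi$ holds at every state of $\M$. This settles one of the two equivalences immediately.

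For the equivalence involving $\Kh_\emptyset$, the crucial point is that there is essentially only one strategy to consider. Because $A_\emptyset^*=\emptyset$, any strategy $\sigma_\emptyset:[S]_\emptyset\to A_\emptyset^*$ must have empty range, hence empty domain; so the empty strategy is the unique strategy of the empty group, as already noted after the definition of strategies. I would then analyze its complete executions starting from $[s]_\emptyset$. Since $\Dom(\sigma_\emptyset)=\emptyset$, no class lies in the domain, so no possible execution can be extended by a transition step; thus the length-one sequence $[s]_\emptyset$ is the only possible execution, and it is complete precisely because its leaf-node $[s]_\emptyset$ is not in $\Dom(\sigma_\emptyset)$. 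Consequently $\CELeafN(\sigma_\emptyset,s)=\{[s]_\emptyset\}=\{S\}$, condition~2 of the semantics (finiteness of all complete executions) holds trivially, and condition~1 reduces to $S\subseteq\llrr{\phi}$, i.e.\ $\phi$ holds at every state. This yields the second equivalence, and combining the two completes the proposition.

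The proof is routine; the only point requiring care is the bookkeeping in the execution analysis for the empty strategy. Specifically, one must verify against Definition~\ref{Execution} that the length-one sequence $[s]_\emptyset$ genuinely qualifies as a \emph{complete} execution (the universally quantified step condition is vacuous, and its leaf-node fails to lie in the empty domain) and that it is the \emph{unique} complete execution, so that its single leaf-node $S$ is exactly what condition~1 quantifies over. Once this is checked, both left-hand statements collapse to the global-truth condition and the three-way equivalence follows.
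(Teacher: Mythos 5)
Your proof is correct and takes essentially the same route as the paper: both hinge on the facts that $[s]_\emptyset=S$, that the empty strategy is the unique strategy of the empty group, and that its unique complete execution from $[s]_\emptyset$ is the length-one sequence with leaf-node $S$, reducing condition~1 to $S\subseteq\llrr{\phi}$. The only organizational difference is that you collapse the existential over strategies into a single biconditional, whereas the paper argues the two directions separately (its converse direction proving the slightly more general fact that $\K_G\phi$ implies $\Kh_G\phi$ for every $G$ via the empty strategy).
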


\begin{proof}
It is obvious that $\M,s\vDash\K_\emptyset\phi$ iff $\M,t\vDash\phi$ for any $t\in S$. We show $\M,s\vDash\Kh_\emptyset\phi$ iff $\M,t\vDash\phi$ for any $t\in S$.

    $(\Rightarrow)$ Since $\M,s\vDash\Kh_\emptyset\phi$, then there is a strategy $\sigma_\emptyset$ for the empty group such that
        \begin{enumerate}
        \item $[t]_\emptyset\subseteq\llrr{\varphi}$ for all $[t]_\emptyset\in\CELeafN(\sigma_\emptyset,s)$, and 
        \item all its complete executions starting from $[s]_\emptyset$ are finite.
        \end{enumerate}
    By definition, the only strategy for the empty group is the empty strategy $\emptyset$. Since $[s]_\emptyset\notin\Dom(\emptyset)$, the only execution of the empty strategy $\emptyset$ starting from $[s]_\emptyset$ is $\delta_\emptyset=[s]_\emptyset$, then $\CELeafN(\sigma_\emptyset,s)=\{[s]_\emptyset\}$. Therefore, $S=[s]_\emptyset\subseteq\llrr{\phi}$.

    $(\Leftarrow)$ Since $\M,t\vDash\phi$ for any $t\in S$, we have $\M,s\vDash\K_\emptyset\phi$. We show a stronger result that $\M,s\vDash\K_G\phi$ implies $\M,s\vDash\Kh_G\phi$ for any group $G$. Consider the empty strategy $\emptyset$ for $G$. Since $[s]_G\notin\Dom(\emptyset)$, the only execution of the empty strategy $\emptyset$ starting from $[s]_G$ is $\delta_G=[s]_G$, then $\CELeafN(\emptyset,s)=\{[s]_G\}$. Since $\M,s\vDash\K_G\phi$, we have $[s]_G\subseteq\llrr{\phi}$. Therefore, $\M,s\vDash\Kh_G\phi$.
\end{proof}

% \begin{proof}
%     $(\Rightarrow)$ Since $\M,s\vDash\Kh_\emptyset\phi$, then there is a strategy $\sigma_\emptyset$ for the empty group such that
%         \begin{enumerate}
%         \item $[t]_\emptyset\subseteq\llrr{\varphi}$ for all $[t]_\emptyset\in\CELeafN(\sigma_\emptyset,s)$, and 
%         \item all its complete executions starting from $[s]_\emptyset$ are finite.
%         \end{enumerate}
%     By definition, the only strategy for the empty group is the empty strategy $\emptyset$. Since $[s]_\emptyset\notin\Dom(\emptyset)$, the only execution of the empty strategy $\emptyset$ starting from $[s]_\emptyset$ is $\delta_\emptyset=[s]_\emptyset$, then $\CELeafN(\sigma_\emptyset,s)=\{[s]_\emptyset\}$. Therefore, $S=[s]_\emptyset\subseteq\llrr{\phi}$.

%     $(\Leftarrow)$ Since $\M,t\vDash\phi$ for any $t\in S$, we have $\M,s\vDash\K_\emptyset\phi$. We show a stronger result that $\M,s\vDash\K_G\phi$ implies $\M,s\vDash\Kh_G\phi$ for any group $G$. Consider the empty strategy $\emptyset$. Since $[s]_G\notin\Dom(\emptyset)$, the only execution of the empty strategy $\emptyset$ starting from $[s]_G$ is $\delta_G=[s]_G$, then $\CELeafN(\emptyset,s)=\{[s]_G\}$. Since $\M,s\vDash\K_G\phi$, we have $[s]_G\subseteq\llrr{\phi}$. Therefore, $\M,s\vDash\Kh_G\phi$.
% \end{proof}

\section{Axiomatization}\label{sec.axiom}
% \subsection{Proof system $\SDKh$}
\begin{center}
\textbf{Proof system $\SDKh$}\footnote{Formulas $\phi,\psi$ are from the language defined in Definition \ref{defLang}.}

\begin{longtable}{llll}
    
    \textbf{Axioms} & &  & \\
    TAUT & axioms of propositional logic\footnote{All instantiations in language $\DKH$ of axioms of propositional logic, to be more precise.} & AxKtoKh & $\K_G\phi\to\Kh_G\phi$\\
    DISTK & $\K_G \phi\land\K_G(\phi\to \psi)\to\K_G \psi$ &AxEmpKhtoK & $\Kh_\emptyset\phi\to\K_\emptyset\phi$\\
    T & $\K_G \phi\to \phi$ & AxKhtoKKh & $\Kh_G\phi\to\K_G\Kh_G\phi$\\
    4 & $\K_G \phi\to\K_G\K_G \phi$ & AxEmpMono & $\K_\emptyset(\phi\to\psi)\to\K_\emptyset(\Kh_G\phi\to\Kh_G\psi)$\\
    5 & $\neg\K_G \phi\to\K_G\neg\K_G \phi$ & AxKhbot & $\Kh_G\bot\to\bot$\\
    AxKMono & $\K_G \phi\to\K_H \phi$, where $G\subseteq H$ & AxKhtoKhK & $\Kh_G\phi\to\Kh_G\K_G\phi$\\
    AxKhMono & $\Kh_G \phi\to\Kh_H \phi$, where $G\subseteq H$ & AxKhKh & $\Kh_G\Kh_G\phi\to\Kh_G\phi$\\

    \textbf{Rules} & & &\\
    \multirow{2}{*}{MP} & \multirow{2}{*}{$\dfrac{\phi,\phi\to\psi}{\psi}$} & \multirow{2}{*}{NECK} & \multirow{2}{*}{$\dfrac{\vdash\phi}{\vdash\K_G\phi}$}\\
    &&&
      \end{longtable}
    \end{center}

$\SDKh$ contains all $\mathbb{S}5$ axioms for $\K$. AxKMono and AxKhMono are monotonicity axioms for distributed knowledge-that and distributed knowledge-how. They express that a group knows/knows how $\phi$ whenever its subgroup knows/knows how $\phi$. Their validity is due to the monotonicity of distributed indistinguishability and distributed action sets shown by Proposition~\ref{propMonoAct}. We will provide the validity of AxKhMono later. AxKhKh is a key axiom w.r.t.\ multi-step planning, revealing the compositional nature of knowing how. Its validity is highly non-trivial as in \cite{fervari2017strategically}, and we will provide a sketch of the proof later.

Note that the logic is \textit{not} normal, for example, $\Kh_G\phi\land\Kh_G\psi\to\Kh_G(\phi\land\psi)$ is invalid. In \cite{fervari2017strategically} the monotonicity rule MONOKh is proposed to fill the gap left by the absence of the K axiom for $\Kh$, which states that if $\phi\to\psi$ is provable then $\Kh\phi\to\Kh\psi$ is provable. In $\SDKh$ we use a stronger \textit{axiom} AxEmpMono which reflects the monotonicity of $\Kh_G$ at the \textit{model level}. The rule MONOKh can be derived from AxEmpMono, NECK and T.

As for other important axioms, AxKtoKh says if a group already knows $\phi$ then it knows how to achieve $\phi$ by doing nothing (the empty strategy). AxEmpKhtoK helps us reduce $\Kh_\emptyset$ to $\K_\emptyset$. AxKhtoKKh is the positive introspection axiom for $\Kh_G$, whose validity comes from uniformity of strategies, and the negative introspection of know-how is derivable. AxKhbot encodes part of the termination condition within the semantics, for it excludes strategies that have no terminating executions. AxKhtoKhK says that $\phi$ is known after executing the 
know-how strategy, as required in the semantics, which is usually also assumed in discussions of contingent planning (cf. \cite{li2017more}).

% \subsection{Soundness}

\begin{proposition}[Axiom AxKhMono]
    $\M,s\vDash\Kh_G\phi$ implies $\M,s\vDash\Kh_H\phi$, where $G\subseteq H$.
\end{proposition}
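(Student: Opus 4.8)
The plan is to build a witnessing strategy for $H$ directly out of the given witnessing strategy for $G$, exploiting the two monotonicity facts from Proposition~\ref{propMonoAct}: since $G\subseteq H$ we have $\sim_H\,\subseteq\,\sim_G$ and $A_G^*\subseteq A_H^*$. The first inclusion says every $\sim_H$-class sits inside a unique $\sim_G$-class, i.e.\ $[t]_H\subseteq[t]_G$ and moreover $[t]_H$ determines $[t]_G$; the second says any action available to $G$ is also available to $H$. These are exactly the ingredients needed to transport the strategy from the coarser to the finer epistemic partition.

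First I would unpack the hypothesis: $\M,s\vDash\Kh_G\phi$ supplies a strategy $\sigma_G:[S]_G\to A_G^*$ with (1) $[t]_G\subseteq\llrr{\phi}$ for every $[t]_G\in\CELeafN(\sigma_G,s)$, and (2) every complete execution of $\sigma_G$ from $[s]_G$ is finite. I would then define $\sigma_H([t]_H):=\sigma_G([t]_G)$ whenever the right-hand side is defined. This is well-defined as a partial function on $[S]_H$ precisely because $[t]_H$ determines $[t]_G$, and its values lie in $A_H^*$ because $A_G^*\subseteq A_H^*$. To confirm $\sigma_H$ is a strategy I only need executability: $\sigma_G([t]_G)$ is executable on $[t]_G$, and since $[t]_H\subseteq[t]_G$, for every $u\in[t]_H\subseteq[t]_G$ there is $v$ with $u\rel{\sigma_G([t]_G)}v$, so the action is executable on the smaller set $[t]_H$ as well.

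The heart of the argument is a projection step: any possible execution $[s_0]_H[s_1]_H\cdots$ of $\sigma_H$ projects entrywise to $[s_0]_G[s_1]_G\cdots$, which is a possible execution of $\sigma_G$. Indeed, a transition $[s_j]_H\rel{d}[s_{j+1}]_H$ is witnessed by some $u\in[s_j]_H\subseteq[s_j]_G$ and $v\in[s_{j+1}]_H\subseteq[s_{j+1}]_G$ with $u\rel{d}v$, so with $d=\sigma_G([s_j]_G)$ we get $[s_j]_G\rel{d}[s_{j+1}]_G$; and $\sigma_H([s_j]_H)$ being defined at every inner node forces $\sigma_G([s_j]_G)$ to be defined there too. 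Since $\sim_H\,\subseteq\,\sim_G$ also gives $[s_0]_G=[s]_G$, the projected sequence is a genuine possible execution of $\sigma_G$ starting from $[s]_G$.

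Finally I would transfer the two semantic conditions along this projection. If $\sigma_H$ admitted an infinite complete execution from $[s]_H$, its projection would be an infinite, hence complete, execution of $\sigma_G$ from $[s]_G$, contradicting (2); this establishes condition (2) for $\sigma_H$. For a finite complete execution $[s_0]_H\cdots[s_n]_H$, its leaf satisfies $[s_n]_H\notin\Dom(\sigma_H)$, whence $[s_n]_G\notin\Dom(\sigma_G)$, so the projection $[s_0]_G\cdots[s_n]_G$ is a complete execution of $\sigma_G$ with leaf $[s_n]_G$; condition (1) for $\sigma_G$ gives $[s_n]_G\subseteq\llrr{\phi}$, and $[s_n]_H\subseteq[s_n]_G$ then yields $[s_n]_H\subseteq\llrr{\phi}$, which is condition (1) for $\sigma_H$. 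The step I expect to demand the most care is checking that the projected sequence is genuinely \emph{complete} for $\sigma_G$ in both cases, i.e.\ correctly matching the completeness criterion (infinite, or leaf outside the domain) between the $H$-level and $G$-level executions; everything else (well-definedness, executability, and the action inclusion) follows directly from Proposition~\ref{propMonoAct}.
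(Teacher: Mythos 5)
Your proof is correct and follows essentially the same route as the paper's: transport the strategy via $\sigma_H([t]_H):=\sigma_G([t]_G)$, verify well-definedness and executability from Proposition~\ref{propMonoAct}, and project executions of $\sigma_H$ entrywise to executions of $\sigma_G$ to transfer both the finiteness and the leaf conditions. The only (harmless) difference is that the paper splits into cases depending on whether $[s]_G\in\Dom(\sigma_G)$ (using the empty strategy in the negative case), whereas your uniform argument absorbs that degenerate case into the length-one complete execution.
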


\begin{proof}
    Suppose $\M, s \vDash \Kh_G \phi$. Then there exists a strategy $\sigma_G$ such that:
\begin{enumerate}
    \item $[t]_G \subseteq \llrr{\varphi}$ for all $[t]_G \in \CELeafN(\sigma_G, s)$, and
    \item all its complete executions starting from $[s]_G$ are finite.
\end{enumerate}

If $[s]_G \notin \Dom(\sigma_G)$, the only execution of $\sigma_G$ starting from $[s]_G$ is $\delta_G = [s]_G$, which is also complete and finite. Then $\CELeafN(\sigma_G, s) = \{[s]_G\}$, implying $[s]_G \subseteq \llrr{\varphi}$. Since $G\subseteq H$, it follows that $[s]_H\subseteq [s]_G\subseteq\llrr{\varphi}$, then by empty strategy we have $\M, s \vDash \Kh_H \phi$.

If $[s]_G \in \Dom(\sigma_G)$, then $A_G^*$ is non-empty. Since $G \subseteq H$, by Proposition \ref{propMonoAct} we have $A_G^* \subseteq A_H^*$, so $A_H^*$ is also non-empty. Define partial function $\sigma_H: [S]_H \to A_H^*$ as follows: for each $v \in S$ such that $[v]_G \in \Dom(\sigma_G)$, let $\sigma_H([v]_H) = \sigma_G([v]_G)$. We first show that $\sigma_H$ is well-defined. Suppose $s_1 \sim_H s_2$. Since $G \subseteq H$, we have $s_1 \sim_G s_2$, so $[s_1]_G = [s_2]_G$. Thus $[s_1]_G \in \Dom(\sigma_G)$ if and only if $[s_2]_G \in \Dom(\sigma_G)$. If $[s_1]_G = [s_2]_G \in \Dom(\sigma_G)$, then $\sigma_H([s_1]_H) = \sigma_G([s_1]_G) = \sigma_G([s_2]_G) = \sigma_H([s_2]_H)$. Additionally, for any $[v]_H \in \Dom(\sigma_H)$, by definition $[v]_G \in \Dom(\sigma_G)$ and $\sigma_G([v]_G)$ is executable on $[v]_G$. Since $\sigma_H([v]_H) = \sigma_G([v]_G)$ and $[v]_H \subseteq [v]_G$, $\sigma_H([v]_H)$ is executable on $[v]_H$. Therefore, $\sigma_H$ is well-defined.

Let $\delta_H = [s_0]_H [s_1]_H \dots$ be an arbitrary complete execution of $\sigma_H$ starting from $[s]_H = [s_0]_H$, with $\lvert \delta_H \rvert > 1$. We show that $\delta_H$ is finite and its leaf nodes are contained in $\llrr{\phi}$. For $[s_j]_H$ in $\delta_H$ such that $[s_j]_H\in\Dom(\sigma_H)$, we have $[s_j]_H \xrightarrow{\sigma_H([s_j]_H)} [s_{j+1}]_H$. Thus there exist $s_j' \in [s_j]_H$ and $s_{j+1}' \in [s_{j+1}]_H$ such that $s_j' \xrightarrow{\sigma_H([s_j]_H)} s_{j+1}'$. By the definition of $\sigma_H$, $\sigma_H([s_j]_H) = \sigma_G([s_j]_G)$, so $s_j' \xrightarrow{\sigma_G([s_j]_G)} s_{j+1}'$. This implies $[s_j']_G \xrightarrow{\sigma_G([s_j]_G)} [s_{j+1}']_G$. Since $s_j \sim_H s_j'$, $s_{j+1} \sim_H s_{j+1}'$, and $G \subseteq H$, we have $s_j \sim_G s_j'$, $s_{j+1} \sim_G s_{j+1}'$, hence $[s_j]_G = [s_j']_G$, $[s_{j+1}]_G = [s_{j+1}']_G$. Therefore, $[s_j]_G \xrightarrow{\sigma_G([s_j]_G)} [s_{j+1}]_G$. Let $\delta_G = [s_0]_G [s_1]_G \dots$. Then $\delta_H$ must be finite; otherwise, $\delta_G$ would be an infinite execution of $\sigma_G$ starting from $[s]_G$. Let $\lvert \delta_G \rvert = \lvert \delta_H \rvert = n + 1$. Since $\delta_H$ is finite and complete, $[s_n]_H \notin \Dom(\sigma_H)$, implying $[s_n]_G \notin \Dom(\sigma_G)$. Thus $\delta_G$ is a finite complete execution starting from $[s]_G$. As $\M, s \vDash \Kh_G \phi$, we have $[s_n]_G \subseteq \llrr{\phi}$, hence $[s_n]_H \subseteq \llrr{\phi}$. Therefore, $\M, s \vDash \Kh_H \phi$.
\end{proof}

\begin{proposition}[Axiom AxKhKh]\label{khkh}
     $\M,s\vDash\Kh_G\Kh_G\phi$ implies $\M,s\vDash\Kh_G\phi$.
\end{proposition}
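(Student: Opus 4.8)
The plan is to recast $\Kh_G\phi$ as a reachability/attractor condition measured by an ordinal rank; this lets me read off a single \emph{memoryless} witnessing strategy directly and dissolves the termination problem uniformly. For a group $G$ and a class $C\in[S]_G$, I would define $\mathrm{rank}(C)$ by transfinite recursion: $\mathrm{rank}(C)=0$ iff $C\subseteq\llrr{\phi}$; and $\mathrm{rank}(C)\le\alpha+1$ iff there is some $d\in A_G^*$ executable on $C$ such that every $D$ with $C\rel{d}D$ has $\mathrm{rank}(D)\le\alpha$ (taking suprema at limit stages), with $\mathrm{rank}(C)=\infty$ if no ordinal works. The whole argument then hinges on the \emph{Key Lemma}: $\M,s\vDash\Kh_G\phi$ iff $\mathrm{rank}([s]_G)<\infty$.

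I would prove the two directions of the Key Lemma separately. For $(\Leftarrow)$, assuming $\mathrm{rank}([s]_G)<\infty$, I define the memoryless strategy $\tau_G$ that on each finite-rank class $C$ with $\mathrm{rank}(C)\ge 1$ selects (via a choice function) a witnessing action $d$ as above; $\tau_G$ is a genuine strategy because each chosen $d$ is executable on $C$ by construction. Every complete execution of $\tau_G$ from $[s]_G$ then strictly decreases the rank at each step, so by well-foundedness of the ordinals it must be finite (this is condition~2 of the semantics), and its leaf, being outside $\Dom(\tau_G)$, has rank $0$, hence lies in $\llrr{\phi}$ (condition~1). For $(\Rightarrow)$, I take a witnessing strategy $\sigma_G$; condition~2 says its execution graph from $[s]_G$ has no infinite path, so this reachable subgraph is well-founded and carries an ordinal tree-rank, while condition~1 puts its leaves in $\llrr{\phi}$. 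A well-founded induction along this graph, using the action $\sigma_G(D)$ at each inner node $D$ in the $\mathrm{rank}$ recursion, then yields $\mathrm{rank}(D)<\infty$ for every reachable $D$, in particular $\mathrm{rank}([s]_G)<\infty$.

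With the Key Lemma, the proposition is short. Suppose $\M,s\vDash\Kh_G\Kh_G\phi$, witnessed by an outer strategy $\sigma_G$ whose complete executions from $[s]_G$ are finite and whose leaves $C$ satisfy $C\subseteq\llrr{\Kh_G\phi}$; by the Lemma each such leaf already has finite $\mathrm{rank}(C)$ with respect to the target $\phi$. I then run the same well-founded induction as in the $(\Rightarrow)$ direction, but with the outer leaves (rather than $\phi$-classes) as the base case carrying their already-finite $\phi$-ranks: for an inner node $D$, the action $\sigma_G(D)$ sends $D$ only to $\sigma_G$-successors of finite $\phi$-rank, so $\mathrm{rank}(D)<\infty$ as well. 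Propagating this back gives $\mathrm{rank}([s]_G)<\infty$, and the $(\Leftarrow)$ direction of the Lemma delivers $\M,s\vDash\Kh_G\phi$.

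The main obstacle is the Key Lemma, specifically the tension between the \emph{memoryless} nature of strategies (functions of the equivalence class alone) and the wish to compose two of them. A direct ``glue the outer and inner strategies together'' construction is delicate precisely because a class may be shared between the outer strategy and the inner strategies at the leaves: an inner execution can re-enter an outer node and restart the outer phase, which threatens both well-definedness and, more seriously, termination through infinite alternation between the two strategies. The rank approach avoids this entirely, since it never concatenates executions but extracts one canonical memoryless strategy from the rank function, with termination guaranteed by well-foundedness of the ordinals rather than by bounding phases. The one technical point requiring care is the $(\Rightarrow)$ direction when $\rel{d}$ is infinitely branching, where I cannot appeal to K\"onig's lemma and must instead rank the well-founded execution graph by transfinite recursion; I also need to track that the witnessing actions remain executable so that the extracted $\tau_G$ is a legitimate strategy.
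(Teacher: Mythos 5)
Your argument is correct, but it takes a genuinely different route from the paper's. The paper proves AxKhKh by composing strategies directly: it takes the outer strategy $\sigma_G$, uses the Well-ordering Theorem to enumerate the leaf classes of its complete executions, picks for each leaf an inner strategy witnessing $\Kh_G\phi$, restricts every strategy to the inner nodes of its own complete executions, and glues the pieces together transfinitely while ``skipping'' classes that are already in the domain or are leaves of earlier pieces --- exactly to defuse the three problems you also identify (irrelevant domain classes, overlapping domains, and re-entry into a leaf that restarts an execution). You never compose strategies at all: your Key Lemma recasts $\Kh_G\phi$ as finiteness of an ordinal attractor rank on equivalence classes; the proposition then reduces to a well-founded induction over the outer execution graph, seeded at the outer leaves (whose $\phi$-ranks are finite by the Lemma applied to the inner know-how), and a single memoryless witness for $\Kh_G\phi$ is read off the rank function by choosing a rank-decreasing executable action on each finite-rank class. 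What your route buys: termination comes for free from well-foundedness of the ordinals, all the domain bookkeeping disappears, and the Key Lemma is a reusable fixpoint characterization of $\Kh_G$ that could serve other validities. What the paper's route buys: it is the same combination technique as the single-agent proof in \cite{fervari2017strategically}, transplanted to groups, so it stays close to the semantics and to the literature without auxiliary rank machinery. Both arguments lean on choice to a comparable extent (your choice function, and the passage from ``no infinite execution'' to well-foundedness, need it; the paper invokes the Well-ordering Theorem here and Dependent Choice in the Truth Lemma). If you write yours up, two details need care: define the rank hierarchy so that it is manifestly monotone, e.g.\ $W_0=\{C\mid C\subseteq\llrr{\phi}\}$, $W_{\alpha+1}=W_\alpha\cup\{C\mid\exists d\in A_G^*\ \text{executable on}\ C\ \text{such that every}\ d\text{-successor of}\ C\ \text{is in}\ W_\alpha\}$, with unions at limits, so that bounding all successors of a node by the supremum of their ranks is legitimate; and state explicitly that Dependent Choice is what turns condition~2 of the semantics into well-foundedness of the successor relation in your $(\Rightarrow)$ direction.
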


\begin{proof}
    [Sketch] Axiom AxKhKh states that for any pointed model $\M,s$, if there exists a strategy $\sigma_G$ of $G$ such that all complete executions starting from $[s]_G$ are finite, and for each leaf node $[s']_G$ of these executions, there exists another strategy $\sigma_G'$ of $G$ such that all complete executions starting from $[s']_G$ are finite and $\phi$ is satisfied on their leaf nodes, then there exists a strategy $\tau_G$ of $G$ such that all complete executions starting from $[s]_G$ are finite and $\phi$ is satisfied on their leaf nodes. To prove the validity, we need to combine $\sigma_G$ and each $\sigma_G'$ into $\tau_G$, while resolving three critical issues: 

First, the domains of strategies $\sigma_G'$ may contain irrelevant equivalence classes that are not visited during executions starting from its corresponding leaf node; however, when combined with other strategies, new executions might pass through these equivalence classes. This may result in the new execution failing to achieve $\phi$. Second, overlapping domains of different strategies could result in the combined strategy not being a function. Third, the domain of some strategies might include leaf nodes of other strategies' complete executions. This can cause executions that are originally planned to terminate to continue, even leading to infinite executions.

To address these, we will proceed as follows: 
\begin{enumerate}
\item Restrict the domain of each strategy to the inner nodes of its complete executions starting from the corresponding leaf node, eliminating irrelevant equivalence classes. 
\item Use the Well-ordering Theorem to order the leaf nodes of complete executions of $\sigma_G$ (this also orders their corresponding strategies $\sigma_G'$), then combine these strategies according to the ordering while skipping defined equivalence classes and leaf nodes, constructing $\tau_G^\gamma$. 
\item Finally, combine $\sigma_G$ and $\tau_G^\gamma$ into $\tau_G$, also skipping defined equivalence classes and leaf nodes.
\end{enumerate}

We present a sketch of the proof (cf. \cite{fervari2017strategically} for a similar proof). Suppose that $\M,s\vDash\Kh_G\Kh_G\phi$, then there exists a strategy $\sigma_G$ of $G$ such that all complete executions starting from $[s]_G$ are finite and $\Kh_G\phi$ is satisfied on their leaf nodes. By the Well-ordering Theorem, let $\CELeafN(\sigma_G, s)=\left\{S_i \mid i<\gamma\right\}$, where $\gamma$ is an ordinal number. Pick $s_i\in S_i$, then $\left[s_i\right]_G=S_i$. Since $\M,s_i \vDash \Kh_G \varphi$ for each $i<\gamma$, there exists a strategy $\sigma_G^i$ of $G$ for each $i<\gamma$ such that all complete executions starting from $[s_i]_G$ are finite and $\phi$ is satisfied on their leaf nodes. We define $\tau_G^\gamma$ as follows:

Define $\tau_G^i$ inductively where $0 \leq i<\gamma$:
\begin{itemize}
\item If $i=0$, $\tau_G^0=\sigma_G^0|_{\text{CEInner}(\sigma_G^0, s_0)}$;
\item If $i>0$, $\tau_G^i=f_i \cup(\sigma_G^i|_{D_i})$,
\end{itemize}
where $f_i=\bigcup_{j<i} \tau_G^j$, $D_i=$ $\CEInnerN\left(\sigma_G^i, s_i\right) \setminus(\Dom\left(f_i\right)\cup \{[v]_G \in \CELeafN\left(f_i, t\right) \mid[t]_G \in  \Dom\left(f_i\right)\})$. Let $\tau_G^\gamma=\bigcup_{i<\gamma} \tau_G^i$.

We can see that the domain of each $\tau_G^i$ is restricted to the inner nodes of its complete executions starting from the corresponding leaf node, and defined equivalence classes and leaf nodes are skipped (this requires further proof). Moreover, we can verify that $\tau_G^\gamma$ is indeed a partial function. With these results, we can show that on any equivalence class of $\Dom(\tau_G^\gamma)$, $\tau_G^\gamma$ is a witness of $\Kh_G\phi$.

Finally, we define strategy $\tau_G$:
$$\tau_G=\tau_G^\gamma \cup(\sigma_G|_E),$$
where $E= \CEInnerN(\sigma_G, s) \setminus(\Dom(\tau_G^\gamma) \cup\{[v]_G \in \CELeafN(\tau_G^\gamma, t) \mid[t]_G \in \Dom(\tau_G^\gamma)\})$. The combination of $\tau_G^\gamma$ and $\sigma_G$ follows a similar method as before, also skipping defined equivalence classes and leaf nodes. With the above results, we can show that $\tau_G$ is the desired strategy.

\end{proof}

The above propositions give us the soundness of the logic.  
\begin{theorem}[Soundness]
    $\vdash\phi$ implies $\vDash\phi$.
\end{theorem}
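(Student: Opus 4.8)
The plan is the standard one for a Hilbert system: argue by induction on the length of a derivation of $\phi$ in $\SDKh$, so that it suffices to establish two things, namely that every axiom scheme is valid and that both inference rules preserve validity. The inductive step for a formula obtained by a rule reduces to the two rule cases, and the base cases are exactly the axiom schemes; so the whole theorem is the conjunction of these local checks.

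The rules and the purely epistemic fragment I would dispatch first, as they are routine. For MP, validity transfers because $\M,s\vDash\phi$ and $\M,s\vDash\phi\to\psi$ yield $\M,s\vDash\psi$ at every pointed model. For NECK, if $\vDash\phi$ then $\M,s'\vDash\phi$ for every $s'$, in particular for every $s'\in[s]_G$, whence $\M,s\vDash\K_G\phi$. The $\mathbb{S}5$ schemes DISTK, T, 4, 5 are valid because each $\sim_G$ is an equivalence relation (the proposition preceding Definition~\ref{Equivalence class}), and AxKMono is immediate from the inclusion $\sim_H\subseteq\sim_G$ for $G\subseteq H$ recorded in Proposition~\ref{propMonoAct}.

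For the know-how schemes, several cases are already discharged by earlier results. AxKhMono is the content of the AxKhMono proposition above, and AxKhKh is Proposition~\ref{khkh}. Moreover the $(\Leftarrow)$ direction of Proposition~\ref{Khempty} in fact proves the stronger statement that $\M,s\vDash\K_G\phi$ implies $\M,s\vDash\Kh_G\phi$ for every $G$, which is precisely AxKtoKh, while AxEmpKhtoK is the equivalence $\M,s\vDash\Kh_\emptyset\phi$ iff $\M,s\vDash\K_\emptyset\phi$ from the same proposition. The remaining four schemes I would verify directly from Definition~\ref{Semantics} by reusing a single witnessing strategy. For AxKhtoKKh, a strategy is a function on the quotient $[S]_G$, so the collection of complete executions starting from $[s]_G$ is literally the same for any representative; hence a witness for $\Kh_G\phi$ at $s$ also witnesses it at every $s'\in[s]_G$, giving $\M,s\vDash\K_G\Kh_G\phi$. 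For AxKhtoKhK, the leaf condition $[t]_G\subseteq\llrr{\phi}$ says exactly that $\M,t\vDash\K_G\phi$ at each leaf node, so the same strategy witnesses $\Kh_G\K_G\phi$. For AxKhbot, any finite complete execution has a leaf node $[t]_G$, and $[t]_G\subseteq\llrr{\bot}=\emptyset$ is impossible, so $\Kh_G\bot$ is unsatisfiable and the implication is vacuously valid. For AxEmpMono, Proposition~\ref{Khempty} lets me read $\K_\emptyset(\phi\to\psi)$ as the global inclusion $\llrr{\phi}\subseteq\llrr{\psi}$, under which any strategy meeting the leaf condition for $\phi$ also meets it for $\psi$, so $\Kh_G\phi$ entails $\Kh_G\psi$ at every state, i.e.\ $\K_\emptyset(\Kh_G\phi\to\Kh_G\psi)$.

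Since the compositional axiom AxKhKh and the monotonicity axiom AxKhMono---the only schemes whose validity requires genuinely constructing or transforming strategies---have already been handled by the preceding propositions, the main obstacle of the soundness argument lies there rather than in this theorem. What remains here demands only care in reusing a fixed witnessing strategy and checking that the clauses of Definition~\ref{Semantics} transfer; the single point worth stating cleanly is that, because strategies act on $G$-equivalence classes, the set of complete executions depends only on the class $[s]_G$ and not on the chosen representative, which is exactly what makes the introspection scheme AxKhtoKKh and the universal-modality arguments immediate.
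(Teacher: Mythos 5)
Your proposal is correct and takes essentially the same route as the paper: the paper's ``proof'' of soundness consists precisely of the preceding propositions (validity of AxKhMono, validity of AxKhKh in Proposition~\ref{khkh}, and Proposition~\ref{Khempty}) together with the implicit routine induction on derivations and the standard checks for the rules, the $\mathbb{S}5$ schemes, and the remaining know-how schemes. Your explicit verifications of those remaining schemes---AxKtoKh via the stronger claim proved inside Proposition~\ref{Khempty}, and the direct semantic arguments for AxKhtoKKh, AxKhtoKhK, AxKhbot, and AxEmpMono reusing a fixed witnessing strategy---are exactly the details the paper leaves unstated, and they are sound.
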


\section{Completeness} \label{sec.comp}

% In this section we prove the completeness theorem for $\SDKh$. For distributed knowledge, following the standard technique for the completeness of $\mathbb{S}5$ modal logic, we usually define states as maximal consistent sets of formulas and specify that $s\sim_G t$ holds iff $s$ and $t$ share $\K_H\phi$ formulas where $H\subseteq G$. However, if $s$ and $t$ share $\K_i\phi$ and $\K_j\phi$ formulas, then $s\sim_i t$ and $s\sim_j t$, then $s\sim_{ij}t$, but it is not generally the case that $s$ and $t$ share $\K_{ij}\phi$ formulas. To avoid such scenario, one possible approach is to make sure that the model is a tree. That is, $t$ will be defined as two sequences, $sit$ and $sjt$, and $\sim$ is defined based on these sequences, so that $s\sim_i sjt$ and $s\sim_j sit$ will not hold. We ensure this by the ``unraveling'' technique.

In this section, we prove the completeness theorem for $\SDKh$. Inspired by \cite{naumov2018together}, we adopt the unraveling technique to handle distributed knowledge-that. On the other hand, when dealing with distributed knowledge-how, we face a similar problem to that with distributed knowledge-that. To find witness strategies for $\Kh_{G}$-formulas, we make use of atomic group actions and define them as pairs $(\phi, G)$, and then transition relations can be defined as $s\rel{(\phi,G)}t$ iff $\Kh_G\phi\in s$ and $\K_G\phi\in t$, meaning that the witness for $\Kh_G\phi$ can be found in one step by $(\phi, G)$. However, $\Kh_{\{i,j\}}(\phi\land\psi)$ could be satisfied at 
a state $s$ accidentally by $\lr{(\phi,i),(\psi,j)}$ in the absence of $\Kh_{\{i,j\}}(\phi\land\psi)\in s$, since $\K_{\{i,j\}}(\phi\land\psi)\in t$ always holds whenever $s\rel{\lr{(\phi,i),(\psi,j)}}t$. To avoid this, we use the unraveling technique again, splitting distributed transitions into transitions of atomic group actions, thereby ensuring that all joint actions are not executable in the canonical model. Moreover, our multi-step setting introduces complications when refuting $\Kh_{G}$-formulas in the canonical model, where AxKhKh and AxEmpMono will play crucial roles instead of the Cooperation axiom in \cite{naumov2018together}.

Based on the ideas discussed above, we fix a maximal consistent set $X_0$ and define the canonical model $\M^c(X_0)$ as follows:
\begin{definition}[Canonical model]
    Given a maximal consistent subset $X_0$ of $\DKH$,  the canonical model $\M^c(X_0)=\left\la S^c, \left\{\sim_i^c\right\}_{i\in  I},\{A_G^c\}_{G\subseteq I}, \{\rel{a}\mid a\in A_I^{c+}\}, V^c\right\ra$ is defined as follows:
    \begin{itemize}
        \item $S^c$ is the set of all \emph{mixed sequences} $X_0\la(\phi_1,G_1),H_1\ra X_1\cdots \la(\phi_n,G_n),H_n\ra X_n$ such that 
        \begin{itemize}
            \item $X_j$ is a maximal consistent subsets of $\DKH$ for each $j\geq 0$,
            \item $\phi_j\in\DKH$ for each $j\geq 1$,
            \item $G_j\ne\emptyset$ and $G_j,H_j\subseteq I$ for each $j\geq 1$,
            \item $\Kh_{G_j}\phi_j\in X_{j-1}$ and $\K_{G_j}\phi_j\in X_j$ for each $j\geq 1$,
            \item $\{\phi\mid\K_{H_j}\phi\in X_{j-1}\}\subseteq X_j$ for each $j\geq 1$;
        \end{itemize}
        
        for any state $s=X_0\la(\phi_1,G_1),H_1\ra X_1\cdots \la(\phi_n,G_n),H_n\ra X_n$, we use $ed(s)$ to denote $X_n$;
        \item For each $i\in I$, $\sim_i^c$ is defined as follows: for any states $s=X_0\la(\phi_1,G_1),H_1\ra X_1\cdots \la(\phi_n,G_n),H_n\ra X_n$ and $s'=X_0\la(\phi_1',G_1'),H_1'\ra X_1'\cdots \la(\phi_m',G_m'),H_m'\ra X_m'$, $s\sim_i^c s'$ iff there exists an integer $k$ such that:
        \begin{itemize}
            \item $0\leq k\leq \min\{n,m\}$;
            \item $X_j=X_j^{\prime}$ and $H_j=H_j^{\prime}$ for each $1 \leq j \leq k$;
            \item $i\in H_j$ for each $k<j \leq n$;
            \item $i\in H_j'$ for each $k<j \leq m$;
        \end{itemize}
        \item For each nonempty $G\subseteq I$, $A_G^c=\{(\phi, G)\mid \phi\in\DKH\}$ and $A_\emptyset^c=\emptyset$;
        \item For any $(\phi,G)\in A_I^{c+}$ and states $s,t$, $s\rel{(\phi,G)}t$ iff $t=s\la(\phi,G),G'\ra X$ where $G'\subseteq I$ and $X$ is an MCS;
        \item For any $p\in P$, $V^c(p)=\{s\in S^c\mid p\in ed(s)\}$.
    \end{itemize}
\end{definition}

Before explaining the construction of the canonical model, we first verify that $\M^c(X_0)$ is well-defined, i.e., $\sim_i^c$ is an equivalence relation and $A^c_G$ satisfies the requirement in Definition \ref{defmodel}. 
%The first one can be proved as in \cite{naumov2018together}.
\begin{proposition}\label{propEqRel}
    $\sim^c_i$ is an equivalence relation on $S^c$ for any $i\in I$.
\end{proposition}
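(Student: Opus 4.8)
The plan is to check reflexivity, symmetry, and transitivity directly from the definition of $\sim_i^c$, which asserts the existence of a cut-index $k$ up to which the two mixed sequences share their MCS-components $X_j$ and their agent-sets $H_j$, and beyond which the agent $i$ lies in every $H_j$ of both sequences.

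Reflexivity and symmetry are immediate. For a state $s$ of length $n$, taking $k = n$ witnesses $s \sim_i^c s$: the agreement clauses hold trivially and the two membership clauses are vacuous since the range $n < j \leq n$ is empty. Symmetry holds with the \emph{same} witness $k$, since swapping $s$ and $s'$ only exchanges the two symmetric membership clauses while preserving the agreement clause $X_j = X_j'$, $H_j = H_j'$.

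The substantive case is transitivity. Given $s \sim_i^c s'$ witnessed by $k_1$ and $s' \sim_i^c s''$ witnessed by $k_2$, with $s, s', s''$ of lengths $n, m, l$ respectively, I would propose $k = \min\{k_1, k_2\}$ as the witness for $s \sim_i^c s''$. The bound $0 \leq k \leq \min\{n, l\}$ follows from $k \leq k_1 \leq n$ and $k \leq k_2 \leq l$, and for $1 \leq j \leq k$ the agreements $X_j = X_j''$ and $H_j = H_j''$ are obtained by composing the two agreements through $s'$. The heart of the argument is the two membership clauses, which I would handle by a case split on whether $k_1 \leq k_2$ or $k_2 < k_1$. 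The key structural observation is that both $k_1$ and $k_2$ are bounded by $m$, so the entire gap $\min\{k_1,k_2\} < j \leq \max\{k_1,k_2\}$ lies within the index range of $s'$; on this gap the relevant $H_j$ still coincides with $H_j'$, which lets me import ``$i \in H_j'$'' from the witness whose cut was the smaller one, while beyond $\max\{k_1, k_2\}$ membership of $i$ is inherited directly from the appropriate original relation. Keeping these three index ranges and their bounds straight is the only place the argument can slip, so that is the step I would write out most carefully.
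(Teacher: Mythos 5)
Your proposal is correct and follows essentially the same route as the paper's proof: both take the minimum of the two witnessing cut-indices for transitivity, and both close the gap between the two cuts by combining the agreement clause $H_j = H_j'$ from the witness with the larger cut (valid there since both cuts are bounded by the length $m$ of the middle sequence) with the membership clause $i \in H_j'$ from the witness with the smaller cut. The only difference is presentational: the paper writes out the case split $k \leq k'$ versus $k' < k$ explicitly, which is exactly the step you flagged as the one to write out carefully.
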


% Due to space limitations, the proof is provided in Appendix \ref{app.eqrel}.

\begin{proof}

Let $ i \in I $. $\sim^c_i$ clearly satisfies reflexivity and symmetry. 

For any states $s=X_0\la(\phi_1,G_1),H_1\ra X_1\cdots \la(\phi_n,G_n),H_n\ra X_n$,
$s'=X_0\la(\phi_1',G_1'),H_1'\ra X_1'\cdots \la(\phi_m',G_m'),H_m'\ra X_m'$, and
$ s'' = X_0\la(\phi_1'',G_1''),H_1''\ra X_1''\cdots \la(\phi_r'',G_r''),H_r''\ra X_r''$,
suppose $ s \sim_i^c s' $ and $ s' \sim_i^c s'' $. Then there exist integers $ k $ and $ k' $ such that:

\begin{itemize}
    \item $ 0 \leq k \leq \min \{n, m\} $, satisfying:
    \begin{itemize}
        \item $ X_j = X_j' $ and $ H_j = H_j' $ for $ 1 \leq j \leq k $;
        \item $ i \in H_j $ for $ k < j \leq n $;
        \item $ i \in H_j' $ for $ k < j \leq m $;
    \end{itemize}
    \item $ 0 \leq k' \leq \min \{m, r\} $, satisfying:
    \begin{itemize}
        \item $ X_j' = X_j'' $ and $ H_j' = H_j'' $ for $ 1 \leq j \leq k' $;
        \item $ i \in H_j' $ for $ k' < j \leq m $;
        \item $ i \in H_j'' $ for $ k' < j \leq r $.
    \end{itemize}
\end{itemize}

Let $ k'' = \min \{k, k'\} $. Then for all $ j $:
\begin{itemize}
    \item If $ 1 \leq j \leq k'' $, then $ X_j = X_j'' $ and $ H_j = H_j'' $;
    \item If $ k'' < j \leq n $:
    \begin{itemize}
        \item Case $ k \leq k' $: $ k'' = k $, and then $ i \in H_j $ for $ k'' < j \leq n $;
        \item Case $ k' < k $: $ k'' = k' $, and then:
        \begin{itemize}
            \item For $ k < j \leq n $, $ i \in H_j $;
            \item For $ k' < j \leq k $, since $k\leq m$, we have $ H_j = H_j' $ and $ i \in H_j' $, and then $ i \in H_j $.
        \end{itemize}
    \end{itemize}
    \item If $ k'' < j \leq r $, similarly $ i \in H_j'' $.
\end{itemize}

Therefore, $ s \sim_i^c s'' $. This proves that $\sim^c_i$ is an equivalence relation on $ S^c $.
\end{proof}
By the definition of $A^c_G$ it is obvious: 
\begin{proposition}
    $G\subsetneq H$ implies $A_G^c\cap A_H^c=\emptyset$.
\end{proposition}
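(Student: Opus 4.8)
The plan is to exploit the fact that, in the canonical model, every atomic group action carries its own group as the second coordinate of the pair defining it. By construction $A_G^c=\{(\phi,G)\mid\phi\in\DKH\}$ for nonempty $G$, so the second coordinate of any action in $A_G^c$ is \emph{literally} the group $G$. Hence actions assigned to distinct groups can never coincide, and the disjointness required by Definition~\ref{defmodel} follows immediately from this labeling.

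First I would dispose of the degenerate case $G=\emptyset$: here $A_\emptyset^c=\emptyset$ by definition, so $A_G^c\cap A_H^c=\emptyset$ holds vacuously. For the main case, note that $G\subsetneq H$ forces $H\ne\emptyset$, and we may also assume $G\ne\emptyset$ (otherwise the previous case applies). Then I would argue by contradiction: suppose some action $a\in A_G^c\cap A_H^c$. By the definition of $A_G^c$ we have $a=(\phi,G)$ for some $\phi\in\DKH$, and by the definition of $A_H^c$ we have $a=(\psi,H)$ for some $\psi\in\DKH$. Comparing the second coordinates of these two representations of $a$ yields $G=H$, contradicting the strict inclusion $G\subsetneq H$. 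Therefore no such $a$ exists and $A_G^c\cap A_H^c=\emptyset$.

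There is essentially no obstacle here: the entire content is that the canonical actions are tagged by their group, so the ``$G\subsetneq H$ implies $A_G\cap A_H=\emptyset$'' clause of the model definition is forced by construction rather than requiring any argument about the logic. This is exactly why the statement is flagged as ``obvious'' in the text, and the only care needed is to remember the boundary convention $A_\emptyset^c=\emptyset$ so that the empty group is covered uniformly.
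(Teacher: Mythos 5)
Your proof is correct and follows exactly the reasoning the paper relies on: the paper dismisses this proposition as immediate from the definition of $A_G^c$, and you have simply spelled out that immediacy (each canonical action carries its group as its second coordinate, so distinct groups get disjoint action sets, with the empty-group case handled by $A_\emptyset^c=\emptyset$). Nothing is missing; your write-up is just a more explicit version of the same observation.
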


Now we are ready to explain how we construct the canonical model. Recall that we would like to unravel both the (group) epistemic relations and the (group) action relations in the model. Instead of building a model and then unraveling it, we do it in one go by considering paths of MCSs. Note that between two given MCSs, there can be both the epistemic relations and the action relations, and this is the reason for considering the so-called  mixed sequences $X_0\la(\phi_1,G_1),H_1\ra X_1\cdots \la(\phi_n,G_n),H_n\ra X_n$, where $H_i$ represents the epistemic relation and $(\phi_i,G_i)$ represents the action transition. Moreover, although we need to unravel the epistemic relations and the action transitions respectively, we do not need to unravel the ``cross-type'' relations, i.e., it is fine to have transitions of both types between two MCSs. The detailed definition of the mixed sequences resembles the usual conditions of canonical epistemic relation and the canonical relations for the actions in the literature of knowing how logics \cite{fervari2017strategically}.

Interested readers may wonder why in mixed sequences, we can always require the double (epistemic+action) transitions between the MCSs. Intuitively, there can be two MCSs with only epistemic relations between them, without any action transition (or the other way around). Thanks to the epistemic relation of the empty group and the trivial action $(\top, G)$, such cases are not ruled out, as there are always such trivial epistemic and action relations between any two relevant MCSs: the empty group cannot distinguish any two states and any group always knows how to achieve $\top$. The trivial epistemic relation $\sim_\emptyset$ and the trivial action transition $\rel{(\top, G)}$ will also play an important role in the existence lemmas for both $\K_G$ and $\Kh_G$.

Finally, the definition of $\sim^c_G$ is essentially the same as the canonical epistemic relation in \cite{naumov2018together}, if we ignore the action transitions in the mixed sequences. Intuitively, two sequences are indistinguishable by $i$ if they are two extensions of the same initial history (ignoring the action information) such that $i$ is always in the groups that cannot distinguish the adjacent MCSs from the departing point onward of the two extensions. The action transitions between the mixed sequences are defined similarly as the canonical action relation used in \cite{fervari2017strategically}. 
\medskip

We now show that all joint actions are not executable in the canonical model, as discussed in the beginning of this section.

%Moreover, although each state is a finite sequence based on epistemic relations and action transitions starting from $X_0$, 

\begin{proposition}\label{PropIntNonexec}
    For any $d\in A_I^{c*}\setminus A_I^{c+}$, $d$ is not executable anywhere in $\M^c$.
\end{proposition}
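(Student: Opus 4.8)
The plan is to reduce the claim to a purely structural observation about single-step canonical transitions via Corollary~\ref{CorAlterDef}, and then exploit the fact that each canonical atomic action $(\phi,G)$ records its own group $G$ in the block it appends to a mixed sequence. First I would note that, since we are given some $d\in A_I^{c*}\setminus A_I^{c+}$, this difference set is nonempty, which forces $I$ to be a multi-agent group (for a singleton $I$ we have $A_I^{c*}=A_I^{c+}$, so the difference is empty and there is nothing to prove). Hence Corollary~\ref{CorAlterDef} applies with $G=I$: there exist atomic actions $a_0\in A_{G_0}^c,\dots,a_m\in A_{G_m}^c$ with $m>0$, where $\{G_0,\dots,G_m\}$ is a partial partition of $I$, and $\rel{d}=\bigcap_{0\le k\le m}\rel{a_k}$.

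The core of the argument is then to show that this intersection is empty. By the definition of $A_{G_k}^c$, each $a_k$ has the form $(\phi_k,G_k)$, and since $\{G_0,\dots,G_m\}$ is a partial partition with $m>0$, the groups $G_0,\dots,G_m$ are nonempty and pairwise disjoint; in particular $G_0\ne G_1$. Suppose, toward a contradiction, that $s\rel{d}t$ for some states $s,t$. Then in particular $s\rel{(\phi_0,G_0)}t$ and $s\rel{(\phi_1,G_1)}t$. By the definition of the canonical transition relation, the first gives $t=s\la(\phi_0,G_0),G'\ra X$ and the second gives $t=s\la(\phi_1,G_1),G''\ra X'$; comparing the final blocks of these two presentations of the same mixed sequence $t$ yields $(\phi_0,G_0)=(\phi_1,G_1)$, hence $G_0=G_1$, contradicting disjointness. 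Therefore $\rel{d}=\emptyset$.

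Finally, since executability of $d$ on a nonempty set $X$ requires an outgoing $d$-transition from every state of $X$, an empty $\rel{d}$ means that $d$ is executable on no nonempty set, i.e.\ $d$ is not executable anywhere in $\M^c$. The main obstacle I anticipate is not conceptually deep but is a bookkeeping point: ensuring the reduction via Corollary~\ref{CorAlterDef} legitimately applies (handling the degenerate singleton-$I$ case) and correctly reading off that two distinct atomic transitions into the \emph{same} target sequence must carry identical last blocks. This last step is exactly where the rigidity of the canonical construction does the real work, since each transition step appends a block tagged by a unique group label, and a partial partition guarantees two of these labels disagree.
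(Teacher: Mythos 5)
Your proof is correct and follows essentially the same route as the paper's: apply Corollary~\ref{CorAlterDef} to decompose $\rel{d}$ into an intersection of atomic canonical transitions $\rel{(\phi_k,G_k)}$ with distinct groups, then derive a contradiction from the fact that a single target mixed sequence $t$ can record only one last block, forcing two disjoint groups to coincide. Your explicit handling of the degenerate singleton-$I$ case (where the statement is vacuous) and your use of only two of the actions $a_0,a_1$ rather than all $m+1$ are minor cosmetic differences, not a different argument.
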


\begin{proof}
    By Corollary \ref{CorAlterDef}, there exist $(\phi_0,G_0)\in A_{G_0}^c,\cdots,(\phi_m,G_m)\in A_{G_m}^c$ such that $m>0$, $\{G_0,\cdots,G_m\}$ is a partial partition of $I$ and $\rel{d}=\bigcap_{0\leq k\leq m} \rel{(\phi_k,G_k)}$, then $G_0,\cdots,G_m$ are distinct. Suppose towards a contradiction that there is $s,t\in S^c$ such that $s\rel{d}t$. 
    Then $s\rel{(\phi_k,G_k)}t$ for $0\leq k\leq m$. Then $t=s\la(\phi_0,G_0),G_0'\ra Y_0=\cdots=s\la(\phi_m,G_m),G_m'\ra Y_m$ where $G_0',\cdots,G_m'\subseteq I$ and $Y_0,\cdots,Y_m$ are MCSs. But $G_0,\cdots,G_m$ are distinct, contradiction.
\end{proof}

For the knowledge-that part in the Truth Lemma, we firstly need the following propositions:

\begin{proposition}\label{Kntok}
For any $s=X_0\la(\phi_1,G_1),H_1\ra X_1\cdots \la(\phi_n,G_n),H_n\ra X_n \in S^c$ and $k \leq n$, if $\K_H\phi \in X_n$ and $H \subseteq H_j$ holds for all $k < j \leq n$, then $\K_H\phi \in X_k$.
\end{proposition}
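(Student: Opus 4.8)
The plan is to prove this by downward induction on the index, peeling off one segment of the mixed sequence at a time. The core is a single-step backward propagation claim: \emph{if $\K_H\phi\in X_j$ and $H\subseteq H_j$, then $\K_H\phi\in X_{j-1}$}. Granting this claim, the proposition follows immediately: starting from $\K_H\phi\in X_n$, I apply the claim successively for $j=n,n-1,\dots,k+1$, using at each step the hypothesis that $H\subseteq H_j$ holds for all $k<j\le n$, and arrive at $\K_H\phi\in X_k$. So the whole argument reduces to proving the one-step claim, and the induction around it is purely bookkeeping.

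For the single backward step I would argue by contradiction, exploiting negative introspection together with monotonicity. Suppose $\K_H\phi\in X_j$, $H\subseteq H_j$, but $\K_H\phi\notin X_{j-1}$. Since $X_{j-1}$ is maximal consistent, $\neg\K_H\phi\in X_{j-1}$, and then axiom $\AxEucK$ (negative introspection, $\neg\K_G\phi\to\K_G\neg\K_G\phi$) gives $\K_H\neg\K_H\phi\in X_{j-1}$. Because $H\subseteq H_j$, axiom $\mathtt{AxKMono}$ ($\K_G\psi\to\K_{G'}\psi$ for $G\subseteq G'$) yields $\K_{H_j}\neg\K_H\phi\in X_{j-1}$. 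Now the defining forward condition of mixed sequences, namely $\{\psi\mid\K_{H_j}\psi\in X_{j-1}\}\subseteq X_j$, forces $\neg\K_H\phi\in X_j$. This contradicts $\K_H\phi\in X_j$ by consistency of $X_j$, establishing the claim.

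The main conceptual point, and the only nontrivial ingredient, is this single-step claim: the forward-looking membership condition on mixed sequences transports \emph{formulas known by $H_j$} from $X_{j-1}$ into $X_j$, but we need information to travel \emph{backward}. The trick is that negative introspection converts the absence of $\K_H\phi$ in $X_{j-1}$ into a positive piece of $H$-knowledge there, which monotonicity upgrades to $H_j$-knowledge so that the forward condition can be applied. Once this interplay of $\AxEucK$ and $\mathtt{AxKMono}$ is in place, no further difficulty arises; the remaining induction is routine, and the hypothesis $H\subseteq H_j$ for $k<j\le n$ is exactly what licenses each application of the monotonicity step along the way.
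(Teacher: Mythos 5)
Your proof is correct and is essentially the paper's own argument: the paper likewise combines negative introspection (Axiom 5), $\mathtt{AxKMono}$, and the forward membership condition of mixed sequences, just packaged as a maximal-counterexample contradiction (take the largest index $m$ with $\neg\K_H\phi \in X_m$ and push the negation forward to $X_{m+1}$) rather than as your one-step backward lemma iterated from $n$ down to $k$. The two formulations are contrapositives of the same single-step claim, so the difference is purely organizational.
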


\begin{proof}
Suppose there exists $k \leq n$ such that $\neg\K_H\phi \in X_k$. Let $m$ be the maximal such $k$. Since $\K_H\phi \in X_n$, we have $m < n$. By Axiom 5, we get $\K_H\neg\K_H\phi \in X_m$. By Axiom AxKMono, we have $\K_{H_{m+1}}\neg\K_H\phi \in X_m$. By the definition of states in the canonical model, it follows that $\neg\K_H\phi \in X_{m+1}$, contradicting the maximality of $m$.
\end{proof}

% \begin{proof}
% Suppose there exists $k \leq n$ such that $\neg\K_H\phi \in X_k$. Let $m$ be the maximal such $k$. Since $\K_H\phi \in X_n$, we have $m < n$. By Axiom 5, we get $\K_H\neg\K_H\phi \in X_m$. By Axiom AxKMono, we have $\K_{H_{m+1}}\neg\K_H\phi \in X_m$. By the definition of states in the canonical model, it follows that $\neg\K_H\phi \in X_{m+1}$, contradicting the maximality of $m$.
% \end{proof}

\begin{proposition}\label{Kkton}
For any $s = X_0\la(\phi_1,G_1),H_1\ra X_1\cdots \la(\phi_n,G_n),H_n\ra X_n \in S^c$ and $k \leq n$, if $\K_H\phi \in X_k$ and $H \subseteq H_j$ holds for all $k < j \leq n$, then $\phi \in X_n$.
\end{proposition}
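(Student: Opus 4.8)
The plan is to prove, by a short forward induction on the index, the strengthened claim that $\K_H\phi \in X_j$ holds for every $j$ with $k \leq j \leq n$; the desired conclusion $\phi \in X_n$ then follows immediately by Axiom T applied at $j = n$. This mirrors Proposition~\ref{Kntok}, except that here knowledge is propagated \emph{forward} along the mixed sequence (from $X_k$ toward $X_n$) rather than backward, so positive introspection (Axiom 4) takes over the role played by Axiom 5 in that earlier argument.

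The base case $j = k$ is exactly the hypothesis $\K_H\phi \in X_k$. For the inductive step, I would assume $\K_H\phi \in X_j$ for some $k \leq j < n$ and derive $\K_H\phi \in X_{j+1}$ as follows. First, Axiom 4 gives $\K_H\K_H\phi \in X_j$. Since $k < j+1 \leq n$, the hypothesis of the proposition yields $H \subseteq H_{j+1}$, so Axiom AxKMono lets me lift the outer modality and obtain $\K_{H_{j+1}}\K_H\phi \in X_j$. Finally, the defining condition on states, namely $\{\psi \mid \K_{H_{j+1}}\psi \in X_j\} \subseteq X_{j+1}$, transfers the formula across the edge and delivers $\K_H\phi \in X_{j+1}$, closing the induction. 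Applying Axiom T at $X_n$ then yields $\phi \in X_n$, as required.

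The only real subtlety, and the step I would treat as the crux, is the interface with the canonical state condition, which requires the knowledge modality to carry exactly the subscript $H_{j+1}$ of the edge being crossed. This is precisely where both the hypothesis $H \subseteq H_j$ (for all $k < j \leq n$) and Axiom AxKMono become indispensable: without lifting $\K_H$ up to $\K_{H_{j+1}}$ via monotonicity, the edge condition could not be invoked at all. Everything else is routine maximal-consistent-set reasoning, and the induction terminates after finitely many steps because the mixed sequence is finite.
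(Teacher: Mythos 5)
Your proof is correct and follows essentially the same route as the paper's: the paper inducts on $n-k$ while you induct forward on the index with the strengthened claim $\K_H\phi \in X_j$, but both arguments propagate knowledge across each edge via exactly the same three ingredients (Axiom 4, Axiom AxKMono lifting $\K_H$ to $\K_{H_{j+1}}$, and the canonical state condition) and finish with Axiom T at $X_n$. The two inductions are trivially equivalent reformulations of one another.
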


\begin{proof}
We show by induction on $n - k$. If $n - k = 0$, then $\K_H\phi \in X_n$. By Axiom T, we have $\phi \in X_n$. If $n - k > 0$, by Axiom 4 and Axiom AxKMono, we get $\K_{H_{k+1}}\K_H\phi \in X_k$. By the definition of states in the canonical model, it follows that $\K_H\phi \in X_{k+1}$. Since $n - (k+1) < n - k$, $k+1 \leq n$, $\K_H\phi \in X_{k+1}$, and $H \subseteq H_j$ holds for all $k+1 < j \leq n$, by induction hypothesis we have $\phi \in X_n$.
\end{proof}

% \begin{proof}
% We show by induction on $n - k$. If $n - k = 0$, then $\K_H\phi \in X_n$. By Axiom T, we have $\phi \in X_n$. If $n - k > 0$, by Axiom 4 and Axiom AxKMono, we get $\K_{H_{k+1}}\K_H\phi \in X_k$. By the definition of states in the canonical model, it follows that $\K_H\phi \in X_{k+1}$. Since $n - (k+1) < n - k$, $k+1 \leq n$, $\K_H\phi \in X_{k+1}$, and $H \subseteq H_j$ holds for all $k+1 < j \leq n$, by induction hypothesis we have $\phi \in X_n$.
% \end{proof}

With Proposition \ref{Kntok} and \ref{Kkton}, we can show the following key lemmas that are essential for the knowledge-that part in the Truth Lemma. Note that we will need to make use of the trivial action transition $\rel{(\top,G)}$ in the proof of the existence lemma, as mentioned earlier in this section.

\begin{lemma}\label{KShare}
If $\K_H \varphi \in ed(s)$ and $s \sim_H^c s^{\prime}$, then $\varphi \in ed\left(s^{\prime}\right)$.
\end{lemma}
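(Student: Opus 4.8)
The plan is to reduce the claim to a back-and-forth transport of the formula $\K_H\varphi$ along the common and the diverging parts of the two mixed sequences, using Propositions \ref{Kntok} and \ref{Kkton}. Write $s = X_0\la(\phi_1,G_1),H_1\ra X_1\cdots\la(\phi_n,G_n),H_n\ra X_n$ and $s' = X_0\la(\phi_1',G_1'),H_1'\ra X_1'\cdots\la(\phi_m',G_m'),H_m'\ra X_m'$, so that $ed(s)=X_n$ and $ed(s')=X_m'$.

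The first and most delicate step is to extract a single index $k$ that simultaneously witnesses $s\sim_i^c s'$ for every $i\in H$. By Definition \ref{group indistinguishability}, $s\sim_H^c s'$ means $s\sim_i^c s'$ for each $i\in H$, and each such relation supplies its own witness $k_i$. I would take $k$ to be the length of the maximal common prefix of the two sequences, i.e., the largest index with $X_j=X_j'$ and $H_j=H_j'$ for all $1\le j\le k$; note the set of such indices is an initial segment bounded by $\min\{n,m\}$, so $k$ is well-defined and $0\le k\le\min\{n,m\}$. Since each witness $k_i$ already forces agreement of the prefixes up to $k_i$, we have $k_i\le k$ for every $i\in H$. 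Consequently, for $k<j\le n$ each $i\in H$ satisfies $k_i<j$ and hence $i\in H_j$, giving $H\subseteq H_j$; symmetrically $H\subseteq H_j'$ for $k<j\le m$. Thus $k$ is a uniform witness with $X_k=X_k'$.

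With this uniform $k$ in hand, the remainder is a direct double application of the pushing lemmas. From $\K_H\varphi\in X_n=ed(s)$ together with $H\subseteq H_j$ for all $k<j\le n$, Proposition \ref{Kntok} yields $\K_H\varphi\in X_k$. Because $X_k=X_k'$ on the common prefix, we get $\K_H\varphi\in X_k'$. Finally, from $\K_H\varphi\in X_k'$ and $H\subseteq H_j'$ for all $k<j\le m$, Proposition \ref{Kkton} applied along $s'$ delivers $\varphi\in X_m'=ed(s')$, which is the desired conclusion. The degenerate cases $k=n$ or $k=m$ are absorbed by the corresponding propositions, where the range of indices to traverse is empty.

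I expect the genuine obstacle to be the first step: justifying the passage from the family of individual witnesses $\{k_i\}_{i\in H}$ to the single uniform witness $k$. The argument hinges on the monotonicity observation that enlarging the agreed prefix only shrinks the set of indices $j$ on which the membership $i\in H_j$ must be checked, so the maximal common prefix dominates all the $k_i$ and inherits each of their membership guarantees. Once this is settled, the back-and-forth transport of $\K_H\varphi$ is routine given Propositions \ref{Kntok} and \ref{Kkton}.
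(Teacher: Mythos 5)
Your proof is correct and follows essentially the same route as the paper: extract a uniform witness index $k$ from the individual witnesses $k_i$, push $\K_H\varphi$ back to $X_k$ via Proposition \ref{Kntok}, transfer it across the identity $X_k = X_k'$, and push it forward to $ed(s')$ via Proposition \ref{Kkton}. The only (minor) difference is that the paper takes $k=\max\{k_i\}_{i\in H}$ and must therefore treat $H=\emptyset$ as a separate case, whereas your maximal-common-prefix choice dominates every $k_i$ and covers the empty group vacuously in a single pass.
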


\begin{proof}
Note that $\sim_H^c$ is defined by Definition \ref{group indistinguishability} via indistinguishability relations of agents. Let $s = X_0\la(\phi_1,G_1),H_1\ra X_1\cdots \la(\phi_n,G_n),H_n\ra X_n$ and $s' = X_0\la(\phi_1',G_1'),H_1'\ra X_1'\cdots \la(\phi_m',G_m'),H_m'\ra X_m'$. If $H=\emptyset$, then since $\emptyset\subseteq H_j$ holds for all $0<j\leq n$, by Proposition \ref{Kntok} we have $\K_\emptyset\phi\in X_0$. Since $\emptyset\subseteq H_j'$ holds for all $0<j\leq m$, by Proposition \ref{Kkton} we have $\phi\in X_m'=ed(s')$.

If $H\ne\emptyset$, since $s \sim_H^c s^{\prime}$, for each $i\in H$ there exists an integer $k_i$ satisfying:
\begin{itemize}
    \item $0 \leq k_i \leq \min \{n, m\}$;
    \item $X_j = X_j^{\prime}$ and $H_j = H_j^{\prime}$ for $1 \leq j \leq k_i$;
    \item $i \in H_j$ for $k_i < j \leq n$;
    \item $i \in H_j'$ for $k_i < j \leq m$.
\end{itemize}
Let $k=\max\{k_i\}_{i\in H}$. Then $X_k=X_k'$ and for each $i\in H$, $i\in H_j$ for $k<j\leq n$ and $i\in H_j'$ for $k<j\leq m$. Then $H\subseteq H_j$ holds for all $k<j\leq n$ and $H\subseteq H_j'$ holds for all $k<j\leq m$. Since $\K_H \varphi \in ed(s)$, by Proposition \ref{Kntok}, $\K_H\phi \in X_k = X_k'$. By Proposition \ref{Kkton}, $\phi \in X_m' = ed(s')$.
\end{proof}

% \begin{proof}
% Note that $\sim_H^c$ is defined by Definition \ref{group indistinguishability} via indistinguishability relations of agents. Let $s = X_0\la(\phi_1,G_1),H_1\ra X_1\cdots \la(\phi_n,G_n),H_n\ra X_n$ and $s' = X_0\la(\phi_1',G_1'),H_1'\ra X_1'\cdots \la(\phi_m',G_m'),H_m'\ra X_m'$. If $H=\emptyset$, then since $\emptyset\subseteq H_j$ holds for all $0<j\leq n$, by Proposition \ref{Kntok} we have $\K_\emptyset\phi\in X_0$. Since $\emptyset\subseteq H_j'$ holds for all $0<j\leq m$, by Proposition \ref{Kkton} we have $\phi\in X_m'=ed(s')$.

% If $H\ne\emptyset$, since $s \sim_H^c s^{\prime}$, for each $i\in H$ there exists an integer $k_i$ satisfying:
% \begin{itemize}
%     \item $0 \leq k_i \leq \min \{n, m\}$;
%     \item $X_j = X_j^{\prime}$ and $H_j = H_j^{\prime}$ for $1 \leq j \leq k_i$;
%     \item $i \in H_j$ for $k_i < j \leq n$;
%     \item $i \in H_j'$ for $k_i < j \leq m$.
% \end{itemize}
% Let $k=\max\{k_i\}_{i\in H}$. Then $X_k=X_k'$ and for each $i\in H$, $i\in H_j$ for $k<j\leq n$ and $i\in H_j'$ for $k<j\leq m$. Then $H\subseteq H_j$ holds for all $k<j\leq n$ and $H\subseteq H_j'$ holds for all $k<j\leq m$. Since $\K_H \varphi \in ed(s)$, by Proposition \ref{Kntok}, $\K_H\phi \in X_k = X_k'$. By Proposition \ref{Kkton}, $\phi \in X_m' = ed(s')$.
% \end{proof}

\begin{lemma}\label{KExist}
If $\K_H \varphi \notin ed(s)$, then there exists a state $s^{\prime} \in S^c$ such that $s \sim^c_H s^{\prime}$ and $\varphi \notin ed\left(s^{\prime}\right)$.
\end{lemma}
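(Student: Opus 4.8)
The plan is to prove this existence lemma by the standard distributed-knowledge argument, adapted to the mixed-sequence canonical model: I realize the required $\sim^c_H$-successor as a \emph{one-step extension} of $s$ whose epistemic label is exactly $H$ and whose action component is the trivial transition $\rel{(\top,I)}$, as flagged in the discussion preceding the lemma.

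First I would establish that the set $\Gamma=\{\psi\mid \K_H\psi\in ed(s)\}\cup\{\neg\varphi\}$ is consistent. Suppose not; then there are finitely many $\psi_1,\dots,\psi_k$ with each $\K_H\psi_\ell\in ed(s)$ and $\vdash(\psi_1\land\cdots\land\psi_k)\to\varphi$. Using $\NECK$ together with $\DISTK$ (which yield the usual conjunction-aggregation $\K_H\psi_1\land\cdots\land\K_H\psi_k\to\K_H(\psi_1\land\cdots\land\psi_k)$) and then $\DISTK$ once more, I would derive $\K_H\psi_1\land\cdots\land\K_H\psi_k\to\K_H\varphi$. Since each conjunct lies in the MCS $ed(s)$, this forces $\K_H\varphi\in ed(s)$, contradicting the hypothesis. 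Hence $\Gamma$ is consistent, and by Lindenbaum I extend it to an MCS $Y$ with $\neg\varphi\in Y$ (so $\varphi\notin Y$) and $\{\psi\mid\K_H\psi\in ed(s)\}\subseteq Y$.

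Next I define $s'=s\,\la(\top,I),H\ra\,Y$, appending a single segment with dummy action component $(\top,I)$ and epistemic label $H$. To see $s'\in S^c$, I check the defining conditions of a mixed sequence for the new segment: $I\ne\emptyset$ and $H\subseteq I$; the action clause holds because $\vdash\K_I\top$ (by $\NECK$) gives $\Kh_I\top\in ed(s)$ via $\AxKtoKh$ while $\K_I\top\in Y$ since $\vdash\K_I\top$; and the epistemic clause $\{\psi\mid\K_H\psi\in ed(s)\}\subseteq Y$ holds by construction of $Y$. I then verify $s\sim^c_H s'$: writing $n$ for the length of $s$, for each $i\in H$ the witnessing index $k=n$ works, since the two sequences agree on their first $n$ segments and the only extra index $j=n+1$ satisfies $i\in H=H'_{n+1}$; thus $s\sim^c_i s'$ for all $i\in H$, i.e.\ $s\sim^c_H s'$. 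Since $ed(s')=Y$ and $\varphi\notin Y$, the lemma follows. (The degenerate case $H=\emptyset$ is subsumed: $\sim^c_\emptyset$ already relates all states and the inclusion $\{\psi\mid\K_\emptyset\psi\in ed(s)\}\subseteq Y$ still holds.)

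The only nonroutine point—the main obstacle—is the construction step rather than the consistency step: one must notice that the purely epistemic successor demanded by $\sim^c_H$ can be manufactured inside $S^c$ as a genuine one-step extension by pairing the epistemic label $H$ with the trivial action $(\top,I)$. Because $\sim^c_i$ inspects only the $X_j$ and $H_j$ components and ignores the action component $(\phi_j,G_j)$, the choice of dummy action is immaterial to the relation; everything else is standard normal-modal-logic bookkeeping.
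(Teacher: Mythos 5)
Your proof is correct and takes essentially the same route as the paper's: show $\{\psi\mid\K_H\psi\in ed(s)\}\cup\{\neg\varphi\}$ is consistent via NECK and DISTK, extend by Lindenbaum, and realize the $\sim^c_H$-successor as a one-step extension of $s$ with a trivial action component and epistemic label $H$. The only difference is cosmetic --- the paper appends $\la(\top,H),H\ra Y^*$ where you append $\la(\top,I),H\ra Y$ --- and your choice of dummy action is in fact slightly cleaner, since it also covers the degenerate case $H=\emptyset$, where $(\top,\emptyset)$ would not be a legal action component (mixed sequences require $G_j\ne\emptyset$).
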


\begin{proof}
Let $Y_0 = \{\psi \mid \K_H\psi \in ed(s)\}$. First, we prove that $Y = Y_0 \cup \{\neg\phi\}$ is consistent. Suppose not. Then there exist $\psi_1, \dots, \psi_n \in Y_0$ with $\K_H\psi_1, \dots, \K_H\psi_n \in ed(s)$ such that $$\vdash (\psi_1 \land \dots \land \psi_n) \to \phi.$$ By NEC and Axiom DISTK, we get:
$$
\vdash (\K_H\psi_1 \land \dots \land \K_H\psi_n) \to \K_H\phi.
$$
Since $\K_H\psi_1, \dots, \K_H\psi_n \in ed(s)$, it follows that $\K_H\phi \in ed(s)$, contradicting the assumption $\K_H\phi \notin ed(s)$. Thus $Y$ is consistent. By Lindenbaum's Lemma, there exists a maximal consistent set $Y^*$ such that $Y\subseteq Y^*$. Let $s' = s\la (\top,H),H\ra Y^*$. To verify $s' \in S^c$, we only need to check that $\K_H\psi \in ed(s)$ implies $\psi \in Y^*$ and that $\Kh_H\top\in ed(s)$ and $\K_H\top\in Y^*$, which is guaranteed by the construction of $Y$ and the maximality of $ed(s)$ and $Y^*$. By definition, $s \sim^c_H s'$, making $s'$ the desired state.
\end{proof}

% \begin{proof}
% Let $Y_0 = \{\psi \mid \K_H\psi \in ed(s)\}$. First, we prove that $Y = Y_0 \cup \{\neg\phi\}$ is consistent. Suppose not. Then there exist $\psi_1, \dots, \psi_n \in Y_0$ with $\K_H\psi_1, \dots, \K_H\psi_n \in ed(s)$ such that $$\vdash (\psi_1 \land \dots \land \psi_n) \to \phi.$$ By NEC and Axiom DISTK, we get:
% $$
% \vdash (\K_H\psi_1 \land \dots \land \K_H\psi_n) \to \K_H\phi.
% $$
% Since $\K_H\psi_1, \dots, \K_H\psi_n \in ed(s)$, it follows that $\K_H\phi \in ed(s)$, contradicting the assumption $\K_H\phi \notin ed(s)$. Thus $Y$ is consistent. By Lindenbaum's Lemma, there exists a maximal consistent set $Y^*$ such that $Y\subseteq Y^*$. Let $s' = s\la (\top,H),H\ra Y^*$. To verify $s' \in S^c$, we only need to check that $\K_H\psi \in ed(s)$ implies $\psi \in Y^*$ and that $\Kh_H\top\in ed(s)$ and $\K_H\top\in Y^*$, which is guaranteed by the construction of $Y$ and the maximality of $ed(s)$ and $Y^*$. By definition, $s \sim^c_H s'$, making $s'$ the desired state.
% \end{proof}

% \begin{proposition}\label{lem sharekh}
%     Let $s,s'$ be states in $S^c$ such that $s\sim^c_G s'$. $\Kh_G\phi\in ed(s)$ implies $\Kh_G\phi\in ed(s')$.
% \end{proposition}

% \begin{proof}
%     Suppose that $\Kh_G\phi\in ed(s)$, then $\K_G\Kh_G\phi\in ed(s)$ by Axiom AxKhtoKKh. By Lemma \ref{KShare} and Lemma \ref{KExist}, $\K_G\Kh_G\phi\in ed(s')$. By Axiom T, $\Kh_G\phi\in ed(s')$.
% \end{proof}

The following lemma is a crucial observation, which is in effect the existence lemma for $\Kh_G$. Note that we will make use of the trivial epistemic relation $\sim_\emptyset$ as mentioned earlier.

\begin{lemma}\label{LemExistence}
    Let $s$ be a state in $S^c$, $G,H$ be groups such that $\emptyset\ne H\subseteq G$, $(\psi,H)$ be an action executable on $[s]_G$. If $\Kh_G\phi\in ed(s')$ for any $s'$ such that $[s]_G\xrightarrow{(\psi,H)}[s']_G$, then $\Kh_G\phi\in ed(s)$.
\end{lemma}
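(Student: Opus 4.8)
The plan is to reduce the statement to the composition axiom AxKhKh: I will show $\Kh_G\Kh_G\phi\in ed(s)$ and then conclude $\Kh_G\phi\in ed(s)$ in one application of AxKhKh with MP. Intuitively the hypothesis says that a single execution of $(\psi,H)$ always lands in a state where $\Kh_G\phi$ holds, so $(\psi,H)$ witnesses $\Kh_G\Kh_G\phi$ at $s$; the real work is to turn this model-level reachability fact into membership of a formula in the MCS $ed(s)$. First I record the easy ingredient: since $(\psi,H)$ is executable on $[s]_G$ and $s\in[s]_G$, there is a transition $s\rel{(\psi,H)}t$, and by the definition of states in the canonical model the mere existence of such a successor forces $\Kh_H\psi\in ed(s)$. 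I will only need executability at the point $s$, not over the whole class.

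The crucial bridge step is to derive $\K_\emptyset(\K_H\psi\to\Kh_G\phi)\in ed(s)$ from the hypothesis, exploiting that $\K_\emptyset$ behaves as a universal modality and that successors of $s$ under $(\psi,H)$ may be attached with the trivial epistemic label $G'=\emptyset$. Concretely, I claim that $\{\chi\mid\K_\emptyset\chi\in ed(s)\}\cup\{\K_H\psi,\neg\Kh_G\phi\}$ is inconsistent: otherwise Lindenbaum's Lemma yields an MCS $X$ containing it, and then $v=s\la(\psi,H),\emptyset\ra X$ is a legitimate state of $S^c$ with $s\rel{(\psi,H)}v$, since all defining conditions hold ($\Kh_H\psi\in ed(s)$, $\K_H\psi\in X$, and $\{\chi\mid\K_\emptyset\chi\in ed(s)\}\subseteq X$). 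As $[s]_G\rel{(\psi,H)}[v]_G$, this $v$ itself qualifies as an $s'$ in the hypothesis, which then forces $\Kh_G\phi\in ed(v)=X$, contradicting $\neg\Kh_G\phi\in X$. From this inconsistency and compactness I extract finitely many $\theta_1,\dots,\theta_k$ with $\K_\emptyset\theta_i\in ed(s)$ and $\vdash(\theta_1\land\cdots\land\theta_k\land\K_H\psi)\to\Kh_G\phi$; packaging the $\theta_i$ via DISTK and NECK gives $\K_\emptyset\bigwedge_i\theta_i\in ed(s)$, and a further NECK/DISTK step upgrades the theorem to $\K_\emptyset(\K_H\psi\to\Kh_G\phi)\in ed(s)$.

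Finally I assemble the pieces. Applying AxEmpMono to the bridge formula gives $\K_\emptyset(\Kh_G\K_H\psi\to\Kh_G\Kh_G\phi)\in ed(s)$, and T for $\K_\emptyset$ strips the modality to yield $\Kh_G\K_H\psi\to\Kh_G\Kh_G\phi\in ed(s)$. Meanwhile, from $\Kh_H\psi\in ed(s)$ I obtain $\Kh_H\K_H\psi\in ed(s)$ by AxKhtoKhK and then $\Kh_G\K_H\psi\in ed(s)$ by AxKhMono (using $H\subseteq G$). Modus ponens gives $\Kh_G\Kh_G\phi\in ed(s)$, and AxKhKh delivers $\Kh_G\phi\in ed(s)$. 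The main obstacle is the bridge step: it is where the ``semantic'' hypothesis about reachable equivalence classes is converted into a provable implication living inside $ed(s)$, and it depends delicately on the universal reading of $\K_\emptyset$ (equivalently the relation $\sim_\emptyset$) together with the freedom to attach the empty epistemic label to the new segment of the mixed sequence---exactly the roles flagged before the lemma. Everything after that is routine bookkeeping with the $\mathbb{S}5$ and monotonicity machinery.
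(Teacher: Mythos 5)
Your proposal is correct and follows essentially the same route as the paper's own proof: the same bridge step (showing $\{\chi\mid\K_\emptyset\chi\in ed(s)\}\cup\{\K_H\psi,\neg\Kh_G\phi\}$ is inconsistent by building the state $s\la(\psi,H),\emptyset\ra Y^*$ via Lindenbaum), followed by the same chain of NECK, DISTK, AxEmpMono, T, AxKhtoKhK, AxKhMono, and AxKhKh. The only difference is cosmetic ordering: you lift to group $G$ early (instantiating AxEmpMono with $G$ and applying AxKhMono to $\Kh_H\K_H\psi$, so AxKhKh fires last on $\Kh_G\Kh_G\phi$), whereas the paper instantiates AxEmpMono with $H$ and applies AxKhMono and AxKhKh together at the end.
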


\begin{proof}
    Let $Y_0=\{\chi\mid\K_\emptyset\chi\in ed(s)\}$. We first show that $Y=Y_0\cup\{\K_H\psi,\neg\Kh_G\phi\}$ is inconsistent. Suppose towards a contradiction that $Y$ is consistent, then by Lindenbaum's Lemma, there is an MCS $Y^*$ such that $Y\subseteq Y^*$. Let $s'=s\la(\psi,H),\emptyset\ra Y^*$. By the definition of $Y_0$, for any $\K_\emptyset\chi\in ed(s)$, we have $\chi\in ed(s')$. To verify $s'\in S^c$, we still need to show that $\Kh_H\psi\in ed(s)$ and $\K_H\psi\in Y^*$. Since $\K_H\psi\in Y\subseteq Y^*$, we only need to show that $\Kh_H\psi\in ed(s)$. Since $(\psi, H)$ is executable on $[s]_G$, there exists $t\in S^c$ such that $s\rel{(\psi,H)}t$, then $t=s\la(\psi,H),H'\ra X$ where $H'\subseteq I$ and $X$ is an MCS. Then $\Kh_H\psi\in ed(s)$ and $\K_H\psi\in X$. Therefore $s'\in S^c$. It follows that $s\xrightarrow{(\psi,H)}s'$, then $[s]_G\xrightarrow{(\psi,H)}[s']_G$, which is contradictory with the assumption that $\Kh_G\phi\in ed(s')$ for any $s'$ such that $[s]_G\xrightarrow{(\psi,H)}[s']_G$. Therefore $Y=Y_0\cup\{\K_H\psi,\neg\Kh_G\phi\}$ is inconsistent. Then there are $\chi_1,\cdots,\chi_n\in Y_0$ such that $\Kh_\emptyset\chi_j\in ed(s)$ for each $1\leq j\leq n$ and $\{\chi_1,\dots,\chi_n,\K_H\psi,\neg\Kh_G\phi\}$ is inconsistent. Then 
    $$\vdash(\chi_1\land\dots\land\chi_n)\to(\K_H\psi\to\Kh_G\phi).$$
    By NECK and Axiom DISTK,
    $$\vdash(\K_\emptyset\chi_1\land\dots\land\K_\emptyset\chi_n)\to\K_\emptyset(\K_H\psi\to\Kh_G\phi).$$
    By Axiom AxEmpMono,
    $$\vdash(\K_\emptyset\chi_1\land\dots\land\K_\emptyset\chi_n)\to\K_\emptyset(\Kh_H\K_H\psi\to\Kh_H\Kh_G\phi).$$
    By Axiom T,
    $$\vdash(\K_\emptyset\chi_1\land\dots\land\K_\emptyset\chi_n)\to(\Kh_H\K_H\psi\to\Kh_H\Kh_G\phi).$$
    By Axiom AxKhtoKhK,
    $$\vdash(\K_\emptyset\chi_1\land\dots\land\K_\emptyset\chi_n)\to(\Kh_H\psi\to\Kh_H\Kh_G\phi).$$
    By Axiom AxKhMono and Axiom AxKhKh,
    $$\vdash(\K_\emptyset\chi_1\land\dots\land\K_\emptyset\chi_n)\to(\Kh_H\psi\to\Kh_G\phi).$$
    Since $\K_\emptyset\chi_1,\dots,\K_\emptyset\chi_n,\Kh_H\psi\in ed(s)$, we have $\Kh_G\phi\in ed(s)$.
\end{proof}

We are now ready to prove the Truth Lemma.

\begin{lemma}[Truth Lemma]\label{lemTL}
    For any $\phi\in\DKH$ and $s\in S^c$, $\M^c(X_0),s\vDash\phi$ iff $\phi\in ed(s)$.
\end{lemma}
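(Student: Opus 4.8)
The plan is to prove the Truth Lemma by induction on the structure of $\phi$. The Boolean cases ($\top$, $p$, $\neg\phi$, $\phi\land\psi$) are routine: the atomic case holds by the definition of $V^c$, and the connective cases follow immediately from the induction hypothesis and the maximality/consistency of each $ed(s)$. The real work lies in the two modal cases, $\K_G\phi$ and $\Kh_G\phi$, where I would rely on the existence and sharing lemmas already established.

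For the case $\phi=\K_G\psi$, I would argue both directions using the lemmas about distributed knowledge-that. For the right-to-left direction, suppose $\K_G\psi\in ed(s)$; then for any $s'$ with $s\sim^c_G s'$, Lemma~\ref{KShare} gives $\psi\in ed(s')$, so by the induction hypothesis $\M^c(X_0),s'\vDash\psi$ for all such $s'$, whence $\M^c(X_0),s\vDash\K_G\psi$. For the left-to-right direction, I would argue contrapositively: if $\K_G\psi\notin ed(s)$, then Lemma~\ref{KExist} yields a state $s'$ with $s\sim^c_G s'$ and $\psi\notin ed(s')$; by the induction hypothesis $\M^c(X_0),s'\nvDash\psi$, so $\M^c(X_0),s\nvDash\K_G\psi$. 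Here one must be slightly careful that $\sim^c_G$ (defined from the $\sim^c_i$) matches the intersection used in those lemmas, but this is exactly the content already packaged into Lemmas~\ref{KShare} and~\ref{KExist}.

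The case $\phi=\Kh_G\psi$ is the main obstacle, and I expect it to split into soundness-of-membership (membership implies truth) and completeness-of-membership (truth implies membership). For the direction $\Kh_G\psi\in ed(s)\Rightarrow \M^c(X_0),s\vDash\Kh_G\psi$, I would exhibit a concrete witness strategy in the canonical model: at any equivalence class $[t]_G$ where $\Kh_G\chi$ is ``active,'' use the canonical action $(\chi,G)$ (or an appropriate $(\psi,H)$ with $H\subseteq G$) to step forward, exploiting that $s\rel{(\chi,G)}t'$ forces $\K_G\chi\in ed(t')$ and, via AxKhtoKKh, that $\Kh_G\chi$ persists across $\sim^c_G$ so the strategy is well-defined on equivalence classes. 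The finiteness/termination of all complete executions must be extracted from the structure of mixed sequences (each action transition strictly lengthens the sequence), and the leaf condition from AxKhbot and the induction hypothesis applied at the leaves. For the converse direction, $\M^c(X_0),s\vDash\Kh_G\psi\Rightarrow \Kh_G\psi\in ed(s)$, I would argue contrapositively and inductively along any purported witness strategy: assuming $\Kh_G\psi\notin ed(s)$, I would use Lemma~\ref{LemExistence} as the key engine to show that no strategy can witness $\Kh_G\psi$, because whenever a strategy applies an executable action $(\chi,H)$ at $[s]_G$, Lemma~\ref{LemExistence} propagates the failure $\Kh_G\psi\notin ed(\cdot)$ to some successor, preventing the executions from reaching a state where $\psi$ is guaranteed.

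The hardest part will be this converse direction for $\Kh_G$, where I must reconcile the semantic, multi-step notion of a witness strategy with the syntactic propagation supplied by Lemma~\ref{LemExistence}. The subtlety is that a strategy may be an arbitrary (possibly transfinite-branching) partial function on equivalence classes, so I expect to need a well-founded or inductive argument—tracking a measure such as the length of the mixed sequence or an ordinal rank on executions—to show that failure at the root forces failure somewhere along every complete execution, and to handle the termination requirement so that the absence of a finite guaranteeing execution is forced by $\Kh_G\psi\notin ed(s)$ together with AxKhKh. I would organize this as a claim that for any executable canonical action applied at $[s]_G$, Lemma~\ref{LemExistence} contraposed gives a successor still lacking $\Kh_G\psi$, then lift this single-step statement to full strategies by induction on execution depth, closing the argument against the existence of a finite guaranteeing strategy.
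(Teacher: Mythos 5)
Your overall architecture matches the paper's proof (induction on $\phi$, Lemmas \ref{KShare}/\ref{KExist} for the $\K_G$ case, canonical actions $(\phi,G)$ plus Lemma \ref{LemExistence} for the $\Kh_G$ case), but the right-to-left direction of the $\Kh_G$ case has genuine gaps. First, your witness strategy --- ``at any equivalence class where $\Kh_G\chi$ is active, use $(\chi,G)$'' --- cannot terminate: every $(\chi,G)$-transition by construction lands in a state containing $\K_G\chi$, hence (by AxKtoKh) containing $\Kh_G\chi$, so every successor class is again ``active'' and in the strategy's domain, and \emph{all} complete executions become infinite, violating clause 2 of the semantics. The paper avoids this by splitting on whether $\K_G\phi\in ed(s)$: if yes, the \emph{empty} strategy suffices (as in Proposition \ref{Khempty}); if no, it takes the strategy with singleton domain $\{[s]_G\mapsto(\phi,G)\}$ and uses $\neg\K_G\phi\in ed(s)$ to rule out the self-loop $[s]_G\rel{(\phi,G)}[s]_G$, which is the only way that one-step strategy could fail to terminate. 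Relatedly, your termination measure (``each action transition strictly lengthens the mixed sequence'') is not sound: executions are sequences of \emph{equivalence classes}, and a single class $[t]_G$ contains mixed sequences of unboundedly many lengths, so sequence length does not descend or ascend along class-level transitions and loops among classes are possible in general. Second, you never establish that your chosen action is \emph{executable} on $[s]_G$, which is required for the map to be a strategy at all; in the paper this is the substantive step, a consistency argument for $Y_0\cup\{\K_G\phi\}$ (with $Y_0=\{\chi\mid\K_\emptyset\chi\in ed(v)\}$) whose refutation would yield $\vdash(\K_\emptyset\chi_1\land\dots\land\K_\emptyset\chi_n)\to(\Kh_G\phi\to\bot)$ via NECK, DISTK, AxEmpMono, T, AxKhtoKhK and AxKhbot. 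Your appeal to AxKhbot ``for the leaf condition'' misplaces it: AxKhbot is exactly what makes this executability argument go through, while the leaf condition comes from the transition definition forcing $\K_G\phi$ at targets, Lemma \ref{KShare}, and the induction hypothesis.

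Your left-to-right direction is closer to the paper, but two points need repair. You quantify over ``whenever a strategy applies an executable action $(\chi,H)$,'' yet strategies range over $A_G^{c*}$, which also contains joint actions, and Lemma \ref{LemExistence} applies only to atomic actions $(\psi,H)$ with $\emptyset\ne H\subseteq G$; you need Proposition \ref{PropIntNonexec} (joint actions are nowhere executable in $\M^c$) to conclude that any action a strategy actually uses is atomic. More importantly, ``induction on execution depth'' does not close the argument, because complete executions of the purported witness strategy have no uniform depth bound; the paper instead considers the set $X$ of inner nodes whose $ed(\cdot)$ contains $\neg\Kh_G\phi$, shows via Lemma \ref{LemExistence} that the transition relation restricted to $X$ is entire (a failing successor can never be a leaf, since leaves contain $\Kh_G\phi$ by IH, Lemma \ref{KExist} and AxKtoKh), and then invokes the Axiom of Dependent Choice to manufacture an infinite execution, contradicting the semantic termination clause. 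Some such choice principle, not structural induction, is what turns the one-step failure propagation into the needed contradiction.
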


\begin{proof}
    We show by induction on $\phi$. We only show the case of $\Kh_G\phi$, the other cases are straightforward.

    \textbf{Right to Left: }Suppose that $\Kh_G\phi\in ed(s)$, we show that $\M^c,s\vDash\Kh_G\phi$. There are two cases: $\K_G\phi\in ed(s)$ or $\K_G\phi\notin ed(s)$. If $\K_G\phi\in ed(s)$, then by Lemma \ref{KShare}, $\phi\in ed(s')$ for each $s'\in[s]_G$. By IH, $\M^c,s'\vDash\phi$ for each $s'\in[s]_G$. Therefore, $\M^c,s\vDash\K_G\phi$. By Axiom AxKtoKh and soundness, we have $\M^c,s\vDash\Kh_G\phi$.

    If $\K_G\phi\notin ed(s)$, then $G\ne\emptyset$, otherwise $\K_\emptyset\phi\in ed(s)$ by Axiom AxEmpKhtoK. Consider partial function $\sigma_G=\{[s]_G\mapsto (\phi,G)\}$. We need to show that $(\phi,G)$ is executable on $[s]_G$. Let $v$ be a state in $[s]_G$. Since $\Kh_G\phi\in ed(s)$, then $\K_G\Kh_G\phi\in ed(s)$ by Axiom AxKhtoKKh. By Lemma \ref{KShare} and Lemma \ref{KExist}, $\K_G\Kh_G\phi\in ed(v)$. By Axiom T, $\Kh_G\phi\in ed(v)$. Let $Y_0=\{\chi\mid\K_\emptyset\chi\in ed(v)\}$. We need to show that $Y=Y_0\cup\{\K_G\phi\}$ is consistent. Suppose not, then there are $\chi_1,\dots,\chi_n\in Y_0$ such that $\Kh_\emptyset\chi_j\in ed(v)$ for each $1\leq j\leq n$ and $\{\chi_1,\dots,\chi_n,\K_G\phi\}$ is inconsistent. Then 
    $$\vdash(\chi_1\land\dots\land\chi_n)\to(\K_G\phi\to\bot).$$
By NECK and Axiom DISTK,

    $$\vdash(\K_\emptyset\chi_1\land\dots\land\K_\emptyset\chi_n)\to\K_\emptyset(\K_G\phi\to\bot).$$
By Axiom AxEmpMono,

    $$\vdash(\K_\emptyset\chi_1\land\dots\land\K_\emptyset\chi_n)\to\K_\emptyset(\Kh_G\K_G\phi\to\Kh_G\bot).$$
    By Axiom T,

    $$\vdash(\K_\emptyset\chi_1\land\dots\land\K_\emptyset\chi_n)\to(\Kh_G\K_G\phi\to\Kh_G\bot).$$
    By Axiom AxKhtoKhK,

    $$\vdash(\K_\emptyset\chi_1\land\dots\land\K_\emptyset\chi_n)\to(\Kh_G\phi\to\Kh_G\bot).$$
    By Axiom AxKhbot,

$$\vdash(\K_\emptyset\chi_1\land\dots\land\K_\emptyset\chi_n)\to(\Kh_G\phi\to\bot).$$

Since $\K_\emptyset\chi_1,\dots,\K_\emptyset\chi_n,\Kh_G\phi\in ed(v)$, we have $\bot\in ed(v)$, which is in contradiction with the consistency of $ed(v)$. Therefore, $Y$ is consistent.

By Lindenbaum's Lemma, there is an MCS $Y^*$ such that $Y\subseteq Y^*$. Let $s'=v\la(\phi,G),\emptyset\ra Y^*$. By the definition of $Y$, $\Kh_G\phi\in ed(v)$, $\K_G\phi\in ed(s')$ and $\chi\in ed(s')$ for any $\K_\emptyset\chi\in ed(v)$, therefore $s'\in S^c$. Then $v\xrightarrow{(\phi,G)}s'$. Therefore $(\phi,G)$ is executable on $[s]_G$. Let $[s'']_G$ be an equivalence class such that $[s]_G\xrightarrow{(\phi,G)}[s'']_G$, then there are $t'\in[s]_G$ and $t''\in[s'']_G$ such that $t'\rel{(\phi,G)}t''$. It follows that $t''=t'\la(\phi,G),H\ra X$ where $H\subseteq I$ and $X$ is an MCS. Then $\Kh_G\phi\in ed(t')$ and $\K_G\phi\in ed(t'')$. By Lemma \ref{KShare} and Lemma \ref{KExist}, $\K_G\phi,\phi\in ed(t)$ for each $t\in [s'']_G$. By IH, $\M^c(X_0),t\vDash\phi$. Moreover, since $\neg\K_G\phi\in ed(s)$, it is not the case that $[s]_G\xrightarrow{(\phi,G)}[s]_G$, then all complete executions of $\sigma_G=\{[s]_G\mapsto (\phi,G)\}$ starting from $[s]_G$ are finite. Therefore, $\M^c(X_0),s\vDash\Kh_G\phi$.

\textbf{Left to Right: }Suppose that $\M^c(X_0),s\vDash\Kh_G\phi$, we show that $\Kh_G\phi\in ed(s)$. By semantics, there is a strategy $\sigma_G$ such that
\begin{enumerate}
\item $[t]_G\subseteq\llrr{\varphi}$ for all $[t]_G\in\CELeafN(\sigma_G,s)$, and 
\item all its complete executions starting from $[s]_G$ are finite.
\end{enumerate}
For each $t'\in [t]_G$, by IH, $\phi\in ed(t')$. By Lemma \ref{KExist}, $\K_G\phi\in ed(t)$. By Axiom AxKtoKh, $\Kh_G\phi\in ed(t)$.

If $[s]_G\notin\Dom(\sigma_G)$, then $[s]_G\in\CELeafN(\sigma_G,s)$, then $\Kh_G\phi\in ed(s)$. If $[s]_G\in\Dom(\sigma_G)$, then $G\ne\emptyset$ and $[s]_G\in\CEInnerN(\sigma_G,s)$. In order to show that $\Kh_G\phi\in ed(s)$, we will show a stronger result, that is, $\Kh_G\phi\in ed(s')$ for all $[s']_G\in\CEInnerN(\sigma_G,s)$. We firstly show the following claim:

\begin{claim}\label{infinite}
    If there exists $[s']_G\in\CEInnerN(\sigma_G,s)$ such that $\neg\Kh_G\phi\in ed(s')$, then there exists an infinite execution of $\sigma_G$ starting from $[s]_G$.
\end{claim}

\begin{claimproof}
    Suppose that there exists $[s']_G\in\CEInnerN(\sigma_G,s)$ such that $\neg\Kh_G\phi\in ed(s')$. Let $X=\{[v]_G\in\CEInnerN(\sigma_G,s)\mid\neg\Kh_G\phi\in ed(v)\}$, then $[s']_G\in X$, and for any $[v]_G\in X$, since $[v]_G\in\CEInnerN(\sigma_G,s)$, we have $[v]_G\in\Dom(\sigma_G)$, then $\sigma_G([v]_G)$ is executable on $[v]_G$. By Proposition \ref{PropIntNonexec}, $\sigma_G([v]_G)\in A_G^{c+}$, so it has the form of $(\psi,H)$ where $\emptyset\ne H\subseteq G$.
    
    Define binary relation on $[S^c]_G$ as $R=\{([v]_G,[v']_G)\mid [v]_G\xrightarrow{\sigma_G([v]_G)}[v']_G\}$. For any $[v]_G\in X$, since $\neg\Kh_G\phi\in ed(v)$, by Lemma \ref{LemExistence}, there exists $v'\in S^c$ such that $[v]_G\xrightarrow{\sigma_G([v]_G)}[v']_G$ and $\neg\Kh_G\phi\in ed(v')$. Since $[v]_G\in X$, then $[v]_G\in\CEInnerN(\sigma_G,s)$. Since $\Kh_G\phi\in ed(t)$ for all $[t]_G\in\CELeafN(\sigma_G,s)$, then $[v']_G\notin\CELeafN(\sigma_G,s)$. Since $[v]_G\xrightarrow{\sigma_G([v]_G)}[v']_G$, then $[v']_G\in\CEInnerN(\sigma_G,s)$, then $[v']_G\in X$, then $([v]_G,[v']_G)\in R$. Therefore, $R$ is an entire binary relation on $X$, that is, for each $[v]_G\in X$, there is $[v']_G\in X$ such that $([v]_G,[v']_G)\in R$. By Axiom of Dependent Choice, there exists an infinite sequence $[v_0]_G[v_1]_G\cdots$ such that $([v_n]_G,[v_{n+1}]_G)\in R$ for all $n\in\mathbb{N}$.
    
    By definition of $R$, $[v_0]_G[v_1]_G\cdots$ is a complete execution of $\sigma_G$ starting from $[v_0]$. Since all complete execution of $\sigma_G$ starting from $[s]_G$ are finite and $[v_0]_G\in\CEInnerN(\sigma_G,s)$, there exists an execution of $\sigma_G$ $[s_0]_G\cdots [s_j]_G$ $(j\in\mathbb{N})$ such that $[s_0]_G=[s]_G$ and $[s_j]_G=[v_0]_G$. Then $[s_0]_G\cdots [s_j]_G[v_1]_G\cdots$ is an infinite execution of $\sigma_G$ starting from $[s]_G$.
    \end{claimproof}
    
    For any $[s']_G\in\CEInnerN(\sigma_G,s)$, suppose that $\neg\Kh_G\phi\in ed(s')$. By Claim \ref{infinite}, there exists an infinite execution of $\sigma_G$ starting from $[s]_G$, which is in contradiction with all complete execution of $\sigma_G$ starting from $[s]_G$ are finite. Therefore, $\Kh_G\phi\in ed(s')$ for all $[s']_G\in\CEInnerN(\sigma_G,s)$. Since $[s]_G\in\CEInnerN(\sigma_G,s)$, we have $\Kh_G\phi\in ed(s)$.
\end{proof}

\begin{theorem}
    $\SDKh$ is strongly complete.
\end{theorem}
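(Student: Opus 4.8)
The plan is to derive strong completeness from the Truth Lemma (Lemma \ref{lemTL}) via the standard canonical-model argument, reformulated as the statement that every $\SDKh$-consistent set of $\DKH$-formulas is satisfiable in a model of Definition \ref{defmodel}. The ``strongly complete'' form, namely $\Gamma\vDash\phi$ implies $\Gamma\vdash\phi$, then follows by the usual contrapositive: if $\Gamma\not\vdash\phi$, then $\Gamma\cup\{\neg\phi\}$ is consistent (otherwise classical reasoning, available through TAUT and MP, would give $\Gamma\vdash\phi$), hence by satisfiability there are $\M,s$ with $\M,s\vDash\Gamma\cup\{\neg\phi\}$, witnessing $\Gamma\not\vDash\phi$.

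First I would take an arbitrary consistent set $\Gamma$ and invoke Lindenbaum's Lemma to extend it to a maximal consistent set $X_0$ with $\Gamma\subseteq X_0$; it is precisely the applicability of Lindenbaum's Lemma to possibly \emph{infinite} sets that upgrades the result from weak to strong completeness. With $X_0$ fixed, I would form the canonical model $\M^c(X_0)$. Before using it one must confirm it is a genuine model in the sense of Definition \ref{defmodel}: the relations $\sim_i^c$ are equivalence relations (Proposition \ref{propEqRel}) and the atomic group action sets satisfy the disjointness condition $G\subsetneq H\Rightarrow A_G^c\cap A_H^c=\emptyset$ (the proposition stated immediately after the definition of the canonical model). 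Both facts have already been established.

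Next I would observe that the length-zero mixed sequence consisting of $X_0$ alone is a legitimate state of $S^c$ (the base case $n=0$ of the mixed-sequence definition), and that $ed(X_0)=X_0$ by definition. Applying the Truth Lemma at this root gives, for every $\psi\in\DKH$, that $\M^c(X_0),X_0\vDash\psi$ iff $\psi\in X_0$. Since $\Gamma\subseteq X_0$, every formula of $\Gamma$ is true at $X_0$, so the pointed model $(\M^c(X_0),X_0)$ satisfies $\Gamma$; this establishes satisfiability and completes the argument.

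All the genuinely difficult work is already in hand: the mixed-sequence canonical construction, the non-executability of joint actions (Proposition \ref{PropIntNonexec}), the existence lemmas for $\K_G$ and $\Kh_G$ (Lemmas \ref{KExist} and \ref{LemExistence}), and above all the $\Kh_G$-case of the Truth Lemma. Consequently, I do not expect a substantive obstacle in this final step; the only points demanding a moment's care are making the reduction from ``strongly complete'' to ``every consistent set is satisfiable'' precise and verifying that the chosen root $X_0$ is indeed a state of $S^c$. The full weight of the theorem thus rests on the Truth Lemma proved above.
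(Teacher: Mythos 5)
Your proof is correct and follows exactly the route the paper intends: the paper states the theorem without explicit proof because it follows routinely from the Truth Lemma (Lemma \ref{lemTL}) applied at the root state $X_0$ of the canonical model $\M^c(X_0)$, with Lindenbaum's Lemma supplying the maximal consistent extension of an arbitrary consistent set $\Gamma$. Your reduction of strong completeness to satisfiability of consistent sets, and your check that the length-zero sequence $X_0$ is a legitimate state with $ed(X_0)=X_0$, are precisely the intended (and correct) final steps.
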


\section{Conclusions and future work} \label{sec.conc}

In this paper, we propose a framework of group knowledge-how, featuring the distributed actions of groups that can come from three sources. The framework can be viewed as a generalization of both the planning-based approach and the coalition-based approach to the logic of knowing how. We give a sound and complete proof system of the logic with intuitive axioms. We leave the discussions on the decidability and model-theoretical properties to a future occasion. 

There are many further directions to explore based on this very general framework. One interesting direction is to consider subclasses of the models and study their corresponding logics. For example, one may want to consider the case when the atomic group actions are all empty except the singleton ones, which will bring our framework closer to the setting of coalition logic and (epistemic) ATL such as \cite{naumov2017together,naumov2018together,vanderHoek2003,JamrogaA07}. Conceptually, this also makes sense if one really wants the distributed knowledge-how to be always decomposable into individual know-how. It will also help us to better understand the differences and connections with other related existing frameworks. 

One may notice that in our current setting, a joint action consisting of multiple subgroups doing the same action is equivalent to a single subgroup doing it, \footnote{It is similar in the setting of STIT logic, where ``actions'' are formalized as possible outcomes \cite{Broersen2015}.} e.g., the transition relation of $\lr{a,a}$ is the intersection of $\rel{a}$ and itself, which is exactly $\rel{a}$. This is different from the ATL-like frameworks, where the effect of the joint action $\lr{a,a}$ can be different from $a$ \cite{jamroga2004agents,Bulling2015}. This invites us to extend our work in a non-trivial way to handle resource-sensitive settings where the number of individuals performing the same action matters. We may also add more structure to groups or consider intensional groups as discussed in \cite{MartaIgor23tark}. Also, the members of the groups may not be equal in their roles as in \cite{dunin2004tuning}. In particular, an interesting direction to explore is the possibility that each group has a leader who can plan the actions of the members. The commitments and communication among the leaders may help the groups synchronize better. 

Given our discussions on extra group actions which cannot be reduced to the individual actions, it may be interesting to find out whether this can be applied to distributed knowledge-that as well: the epistemic relation $\sim_G$ is not defined by the intersection of the individual ones but just a subset of it. More precisely, in addition to distributed knowledge where $\sim_G=\bigcap_{i\in G}\sim_i$, through brainstorming, a group can be smarter than the mere union of individuals, so $\bigcap_{i\in G}\sim_i\subseteq\sim_G$ does not necessarily hold. However, we think the logic stays the same. 

 Finally, inspired by various versions of distributed knowledge-how, e.g., \cite{GalimullinK24,BalbianiD24}, we may also consider other definitions of distributed knowledge-how that incorporate dynamics as well.

\paragraph{Acknowledgment} The authors thank Yanjun Li and Pavel Naumov for their comments and discussions about earlier versions of the paper and related topics. The authors are also grateful to the anonymous reviewers, whose comments improved the presentation of the paper. 

% \begin{itemize}
%     \item intersection only, mention the joint work
%     \item generalized distributed knowledge that (brainstorm and putting together existing information)
%     \item distributed know-why and know-what? 
% \end{itemize}

\bibliographystyle{eptcs}
\bibliography{generic}

\end{document}